\newcommand{\A}{\mathcal{A}}
\newcommand{\C}{\mathbb{C}}
\newcommand{\D}{\mathcal{D}}
\newcommand{\F}{\mathcal{F}}
\renewcommand{\L}{\mathcal{L}}
\newcommand{\M}{\mathcal{M}}
\newcommand{\N}{\mathbb{N}}
\newcommand{\R}{\mathbb{R}}
\renewcommand{\S}{\mathbb{S}}
\newcommand{\T}{\mathbb{T}}
\newcommand{\TT}{\mathcal{T}}
\newcommand{\U}{\mathcal{U}}
\newcommand{\Z}{\mathbb{Z}}
\DeclareMathOperator{\supp}{supp}
\DeclareMathOperator{\diam}{diam}
\let\div\relax
\DeclareMathOperator{\div}{div}
\renewcommand{\epsilon}{\varepsilon}
\newcommand{\obar}[1]{\overline{#1}}
\newcommand{\ubar}[1]{\underline{#1}}
\newcommand{\imm}{\mathrm{i}}
\newcommand{\set}[1]{\left\{#1\right\}}
\newcommand{\pa}[1]{\left(#1\right)}
\newcommand{\abs}[1]{\left|#1\right|}
\newcommand{\norm}[1]{\left\|#1\right\|}
\newcommand{\brak}[1]{\left\langle#1\right\rangle}
\newtheorem{thm}{Theorem}[section]
\newtheorem{definition}[thm]{Definition}
\newtheorem{prop}[thm]{Proposition}
\theoremstyle{remark}
\newtheorem{rmk}[thm]{Remark}
\numberwithin{equation}{section}
\title[Essential Self-Adjointness Euler Point Vortices]{Essential Self-Adjointness of Liouville Operator\\for 2D Euler Point Vortices}
\author[F. Grotto]{Francesco Grotto}
  \address{Scuola Normale Superiore, Piazza dei Cavalieri, 7, 56126 Pisa, Italia}
  \email{\href{mailto:francesco.grotto@sns.it}{francesco.grotto@sns.it}}
\date\today
\begin{document}

\begin{abstract}
 We analyse the 2-dimensional Euler point vortices dynamics in the Koopman-Von Neumann approach. 
 Classical results provide well-posedness of this dynamics involving singular interactions for a finite number of vortices,
 on a full-measure set with respect to the volume measure $dx^N$ on the phase space, which is preserved by the measurable flow
 thanks to the Hamiltonian nature of the system. We identify a core for the generator of the one-parameter
 group of Koopman-Von Neumann unitaries on $L^2(dx^N)$ associated to said flow, the core being made of
 observables smooth outside a suitable set on which singularities can occur.
\end{abstract}

\maketitle

\section{Introduction}

In classical, finite-dimensional Hamiltonian systems whose Hamiltonian function involves singular interaction,
there may be singular trajectories in which, at finite time, the driving vector field diverges.
When this happens only for a negligible set of initial conditions with respect to an invariant measure,
thus a physically relevant measure on phase space, the motion is said to be \emph{almost complete}.
A relevant example is the so called \emph{improbability of collisions} in $N$-body systems, a problem
that has received attention both in classical \cite{ai78,sa71,sa73} and more recent \cite{fk19} works.

In such systems lacking global well-posedness, another natural question is whether the Liouville operator,
that is the time evolution generator for the dynamics of observables, is essentially self-adjoint
on a class of observables smooth in a dense set obtained by removing singular points from the phase space,
\cite[Section X.14]{rs75i}.

The present work concerns the 2-dimensional Euler point vortices system, a Hamiltonian system of first order
differential equations describing the dynamics of point particles (\emph{vortices})
whose interaction potential is singular, perfectly fitting the setting we just outlined.

The point vortices system actually describes the evolution of an incompressible ideal fluid whose vorticity
is concentrated in a finite number of points. We consider the torus $\T^2=\R^2/(2\pi\Z)^2$ as space domain:
if $G=(-\Delta)^{-1}$ is the Green function of the Laplace operator (with zero space average), the vorticity distribution
$\omega_t=\sum_{i=1}^N \xi_i \delta_{x_i(t)}$ with intensities of the vortices $\xi_i\in\R$ and positions $x_i\in\T^2$ satisfying
\begin{equation}\label{pointvortices}
\dot x_i(t)=\sum_{j\neq i}^N \xi_j \nabla^\perp G(x_i(t),x_j(t)),
\end{equation}
defines a weak solution to Euler equations in vorticity form,
\begin{equation}\label{euler}
\begin{cases}
\partial_t \omega + u\cdot \nabla \omega =0\\
\nabla^\perp \cdot u=\omega
\end{cases},
\end{equation}
where we can express the velocity field $u$ in terms of $\omega$ by the Biot-Savart law, 
$u=-\nabla^\perp G\ast \omega$, $\nabla^\perp=(\partial_2,-\partial_1)$.

The point vortices system is a classical model. We refer to the monography \cite{mp94} for most
of the notions we are going to rely on, and to \cite{lionsbook} for an overview of
the statistical mechanics point of view.
Integrable and non integrable behaviours in point vortices systems are also the subject of a considerable literature.
We refer to \cite{ArSt99,ArKi19} for vortices on $\T^2$, \cite{KiNe98} for vortices on $\S^2$
and to \cite{Ar07} for a survey on the topic: complete references can be found in those works,
including a large number of studies on vortices on $\R^2$.

Due to the Hamiltonian nature of the system, the dynamics \eqref{pointvortices} preserves the volume measure $dx^N$ 
on the phase space $(\T^2)^N=\T^{2\times N}$.
Notwithstanding the singularity of the interaction $\nabla^\perp G$, according to the result of D\"urr and Pulvirenti \cite{dp82},
the dynamics is well-posed for initial data in a full measure set of the phase space, 
thus defining a measurable flow $T_t:\T^{2\times N}\rightarrow \T^{2\times N}$ and giving positive answer to
the problem of almost completeness.

Let us consider the one-parameter group of Koopman unitaries $U_t$ associated to such flow,
\begin{equation*}
	U_t f=f\circ T_t, \quad f\in L^2(\T^{2\times N}).
\end{equation*}
In \cite{af03}, the authors defined the Liouville operator associated to the evolution problem \eqref{pointvortices}
on a set of cylinder functions of Fourier modes, and raised the question of essential self-adjointness.
We will discuss the setting of \cite{af03} in comparison to ours in \autoref{sec:configurationspace}.

The main result of the present paper, \autoref{thm:main}, is a proof of essential self-adjointness for the Liouville operator on $L^2(\T^{2\times N})$.
We will consider a domain of smooth functions on full-measure open sets, vanishing in a neighbourhood of singular points of the driving vector field,
on which we are able to explicitly write the generator $A$ of the Koopman group $U_t=e^{\imm t A}$,
and show that such observables form a core for $A$.

Even if we will achieve our aim by means of an approximation noticeably differing from the one of \cite{dp82,mp94},
much of their understanding of the point vortices system will be crucial to our efforts.
Our method also draws ideas from the work \cite{mpp78}, which discusses essential self-adjointness
of Liouville's operator for an infinite particle system with regular interactions. 

The study of the generator of point vortices dynamics might provide some insight in the much more difficult problem
of essential self-adjointness for the generator of Euler dynamics with enstrophy-measure (space white noise) fixed time distributions.
Euler evolution in such low regular regimes has been linked to point vortices dynamics by Central Limit Theorems
\cite{flandoli,Gr19,GrRo19}: only existence of solutions (dating back to \cite{ac90}) is known, and thus essential self-adjointness
of the generator is sought as a first uniqueness result \cite{AlFe02,AlBaFe08,AlBaFeERR}. However, as already remarked in \cite{af03},
the point measures we consider in this paper are singular with respect to white noise. 
We also mention the recent work \cite{GuPe18}, concerning identification of a domain for the singular generator of
stochastic Burgers equation in an infinite dimensional Gaussian space.

For the sake of simplicity, we will first discuss our result in the case of a fixed number $N$ of vortices
on the torus. In \autoref{sec:generalisations} we comment on how one can modify our arguments to cover
different geometries and reference measures, and finally discuss the configuration space approach of \cite{af03} in \autoref{sec:configurationspace}.

\section{The Liouville Operator for Point Vortices Systems}\label{sec:main}

In this section, $\ubar x=(x_1,\dots x_N)\in (\T^2)^N$ are the position of point vortices on $\T^2$,
and $\ubar \xi=(\xi_1,\dots \xi_N)\in\R^N$ their intensities. We denote by $dx^N$ the uniform (Haar's) measure
on $(\T^2)^N$, such that its total volume is normalised to $1$, and by $d(x,y)$ the distance of points $x,y\in\T^2$. 
Also, we will denote $|B|$ the measure of measurable subsets $B\subset (\T^2)^N=\T^{2\times N}$.
All observables are intended as complex valued, and we will denote $L^2(\T^{2\times N})=L^2(\T^{2\times N};\C)$. 
We distinguish the imaginary unit $\imm\in\C$ and the index $i\in \N$ (in italics).
Time $t\in\R$ ranges the whole real line, but for simplicity we will often consider positive times $t>0$,
the other case being completely analogous.

\subsection{Classical Results on Improbability of Collisions}\label{ssec:collisions}

The system \eqref{pointvortices} is an ordinary differential equation in finite dimension, whose vector field is given by
\begin{equation}\label{vectorfield}
	B_i(\ubar x)=\sum_{j\neq i}^N \xi_j K(x_i(t),x_j(t)), \quad \ubar x\in(\T^2)^N,
\end{equation}
where $K=\nabla^\perp G$. The vector field is singular on the generalised diagonal
\begin{equation*}
	\triangle^N=\set{\ubar x\in(\T^2)^N: x_i=x_j \text{ some }i\neq j},
\end{equation*}
because $K(x,y)$ diverges when $x=y\in\T^2$: indeed, we recall that
\begin{equation}\label{greentorus}
    G(x,y)=-\frac1{2\pi}\log d(x,y)+g(x,y),
\end{equation}
with $g\in C^\infty(\T^2\times\T^2)$ a symmetric, zero averaged function. 
Since $G(x,y)=G(x-y,0)$ is translation invariant, the latter representation can be obtained by
solving the equation $-\Delta u(x)=\delta_0(x)$ in a ball $B\subset \T^2$ centred in $0$ with Dirichlet boundary conditions,
and considering the difference $G-u$.

Although $B$ is smooth outside $\triangle^N$, classical well-posedness theorems can only provide existence and uniqueness
of solutions only locally in time. Indeed, if some vortices collapse, that is if a solution reaches $\triangle^N$,
the vector field diverges.
However, $B$ is divergence free (in the sense of distributions) and it thus formally preserves the measure $dx^N$. 
In fact, the point vortices system is Hamiltonian with respect to conjugate coordinates $(x_{i,1},\xi_i x_{i,2})$, 
and Hamiltonian function
\begin{equation*}
H(\ubar x)=\sum_{i< j}^N \xi_i\xi_j G(x_i,x_j)
\end{equation*}
(the interaction energy of vortices). By exploiting this peculiar structure of $B$, it is possible to prove that in fact,
for any fixed --but arbitrary-- choice of intensities $\xi$, the system \eqref{pointvortices} has a global (in time), smooth solution
for almost every initial condition with respect to $dx^N$.

The case in which all intensities $\xi_i$ have the same sign is easier, since the minimum distance between vortices along a trajectory
in phase space can be controlled by means of the Hamiltonian $H$. Indeed, by \eqref{greentorus}, there exist constants $C,C'$
depending on $\ubar \xi,N$ such that
\begin{equation}\label{distancecontrol}
	\min_{i\neq j} |x_i-x_j|\geq C e^{-C' |H(\ubar x)|},
\end{equation}
and since the right-hand side is a first integral of the motion, we can extend local-in-time solutions 
of \eqref{pointvortices} starting from $\ubar x\in \T^{2\times N}\setminus\triangle^N$ to global solutions
which are also smooth in time.

When vortices intensities $\ubar \xi\in\R^N$ take both positive and negative values, there might exist
initial conditions leading to collapse, see \cite[Section 4.2]{mp94} and the references above on integrable motion
of vortices. Indeed, the energy $H(\ubar x)$ gives us
no control whatsoever on the vortices distances along the trajectory of $\ubar x$, since
$H$ include now both positive and negative terms which can cancel out large contributions
of close couples of vortices. 

Almost completeness in the general case of intensities with positive and negative signs is a classical result due to D\"urr and Pulvirenti, \cite{dp82}; 
we also refer to \cite{mp94} for the case of vortices on the whole plane (see \autoref{sec:generalisations} below). 
The following result summarises Theorems 2.1 and 2.2 of \cite{dp82}

\begin{thm}[D\"urr-Pulvirenti]\label{thm:dp}
	Let  $\ubar \xi\in\R^N$ be fixed. 
	There exists a full-measure set $M\subset (\T^2)^N$ and a one-parameter group of 
	maps $T_t:M\rightarrow M$ such that $\ubar x(t)=T_t(\ubar x)$ is the unique,
	smooth solution of \eqref{pointvortices} with initial datum $\ubar x(0)=\ubar x\in M$.
	For all $t\in\R$, $T_t$ defines a measurable, measure preserving, $dx^N$-almost everywhere
	invertible transformation of $(\T^2)^N$.
	
	Define moreover, for $t>0$ and $\ubar x\in (\T^2)^N$,
	\begin{equation*}
	d_t(\ubar x)=\inf_{s\in[0,t]} \min_{i\neq j}|(T_s\ubar x)_i-(T_s\ubar x)_j|.
	\end{equation*}
	Then there exists a constant $C>0$ independent of $c>0$ such that
	\begin{equation}\label{nocollapse}
	\abs{\set{d_t(\ubar x)<c}}\leq \frac{CT}{-\log c}.
	\end{equation}
\end{thm}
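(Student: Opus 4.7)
The strategy is to tackle both assertions simultaneously via a regularisation of the singular interaction kernel $K=\nabla^\perp G$. First I would introduce, for $\epsilon>0$, a smoothed Green function $G^\epsilon$ cutting off the logarithmic singularity below scale $\epsilon$ while preserving the Hamiltonian structure: the induced vector field $B^\epsilon$ is then divergence-free and smooth, so the flow $T_t^\epsilon$ is globally defined and preserves $dx^N$. The essential consistency feature is that $B^\epsilon\equiv B$ on configurations whose minimum pairwise distance exceeds $\epsilon$, so that $T_t^\epsilon \ubar x = T_t^{\epsilon'}\ubar x$ whenever the whole trajectory in $[0,t]$ stays above $\max(\epsilon,\epsilon')$ in minimum separation.

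The core task is then the estimate \eqref{nocollapse} for the regularised dynamics, uniformly in $\epsilon<c$. Computing along a trajectory,
\begin{equation*}
	\frac{d}{ds}|x_i(s)-x_j(s)|^2 = 2(x_i-x_j)\cdot (B_i^\epsilon-B_j^\epsilon),
\end{equation*}
the singular self-interaction of the pair $(i,j)$ is perpendicular to $x_i-x_j$ by the structure $K=\nabla^\perp G$ and drops out of the projection. What survives is either bounded (from the smooth part $g$ of \eqref{greentorus}) or comes from interactions with other vortices $x_k$, $k\neq i,j$, and the latter is harmless as long as those $x_k$ do not themselves approach $\set{x_i,x_j}$. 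The aim is then to derive a uniform bound of the form $\int \sup_{s\in[0,t]}\bigl(-\log \min_{i\neq j}|x_i(s)-x_j(s)|\bigr) dx^N \leq Ct$, from which \eqref{nocollapse} follows by Chebyshev's inequality.

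The main obstacle is precisely the scenario in which more than two vortices cluster simultaneously, for then the ``smooth remainder'' in the pair argument itself blows up. This is the nontrivial content of \cite{dp82}: one decomposes the configuration according to the partition of $\set{1,\dots,N}$ induced by groups of mutually close vortices, uses that each cluster's centre of vorticity evolves under the smooth field generated by vortices outside it, and recurses on the internal dynamics of the cluster. A workable organisation is to estimate, for each partition $\mathcal P$ and each scale $c$, the measure of the initial data whose trajectory realises $\mathcal P$ as its cluster pattern at some $s\in[0,t]$, and then to sum over the finitely many partitions.

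Once \eqref{nocollapse} is established uniformly in $\epsilon$, one sets $M=\bigcap_{n\in\N}\bigcup_{k\in\N}\set{d_n>1/k}$: this is a full-measure subset of $(\T^2)^N$ on which $T_t \ubar x :=\lim_{\epsilon\to 0}T_t^\epsilon \ubar x$ is well defined and coincides with the classical smooth solution of \eqref{pointvortices}. The group law, measurability and measure preservation of $T_t$ are inherited from those of $T_t^\epsilon$, while $dx^N$-a.e. invertibility follows from the time-reversal symmetry of \eqref{pointvortices}, changing $t\to -t$ merely flipping the sign of the vector field.
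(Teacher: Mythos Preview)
Your overall framework---regularise with $G^\epsilon$, exploit that $B^\epsilon$ is divergence-free so $T^\epsilon_t$ preserves $dx^N$, derive the measure estimate uniformly in $\epsilon$, then pass to the limit and build $M$---matches the paper exactly. The difference lies in how you propose to obtain the core estimate.

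You work pairwise with $|x_i-x_j|^2$, observe the self-interaction cancellation, and then propose a cluster/partition recursion to handle three or more nearby vortices. The paper instead bypasses this combinatorics with a single Lyapunov function
\begin{equation*}
\L^\epsilon(\ubar x)=\sum_{i\neq j} G_\epsilon(x_i-x_j),
\end{equation*}
which (unlike $H$, which carries the signs $\xi_i\xi_j$) controls the minimum distance directly. Differentiating along $T^\epsilon$ and using $\nabla G_\epsilon\cdot\nabla^\perp G_\epsilon=0$, the contribution of the pair $(i,j)$ interacting with itself cancels and only \emph{triple} terms $\nabla G_\epsilon(x_i-x_j)\cdot K_\epsilon(x_k-x_i)$ with $i,j,k$ pairwise distinct survive. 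These are dominated by $C|x_i-x_j|^{-1}|x_k-x_i|^{-1}$, which is integrable on $\T^{2\times 3}$ uniformly in $\epsilon$; measure preservation then gives $\int \sup_{s\in[0,t]}|\L^\epsilon(T^\epsilon_s\ubar x)|\,dx^N\leq Ct$, and Chebyshev finishes.

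Your cluster recursion could in principle be carried out---variants of it appear in treatments on the whole plane---but it is considerably more involved: one must separate scales, control the motion of cluster barycentres, and iterate over sub-clusters. The Lyapunov-function route absorbs all of this into a single $L^1$ bound, and it is precisely this estimate that the rest of the paper uses as a black box.
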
 

We now briefly review the proof of \autoref{thm:dp}, since its core ideas underlie most of our arguments.
Consider the smooth vector field on $\T^{2\times N}$ given by
\begin{equation}\label{smoothvectorfield}
	B_i^\epsilon(\ubar x)=\sum_{j\neq i}^N \xi_j K_\epsilon(x_i(t),x_j(t)), \quad K_\epsilon =\nabla^\perp G_\epsilon,
\end{equation}
that is the point vortices vector field \eqref{vectorfield} with a smoothed interaction obtained by $G_\epsilon\in C^\infty(\T^2)$
such that:
\begin{equation}\label{gapproxi}
	G_\epsilon|_{B(0,\epsilon)^c}=G|_{B(0,\epsilon)^c}, 
	\quad |\nabla G^\epsilon(x)|\leq |\nabla G(x)|\leq \frac{C}{|x|} \quad \forall x\in\T^2,
\end{equation}
where $B(0,\epsilon)\subset \T^2$ is the ball of radius $\epsilon$ centred in $0$ and $C>0$ is a universal constant. 
We denote by $T^\epsilon_t \ubar x=T^\epsilon(t,\ubar x)$ the flow of the ordinary differential equation
\begin{equation*}
	\begin{cases}
		\dot{\ubar x}(t)=B^\epsilon(\ubar x(t))\\
		\ubar x(0)=\ubar x	   
	\end{cases},
\end{equation*}
which is globally well-posed since its driving vector field is smooth. 
Moreover, we define for $\epsilon>0, t>0$ and $\ubar x\in \T^{2\times N}$,
\begin{equation}\label{depsilon}
d^\epsilon_t(\ubar x)=\inf_{s\in[0,t]} \min_{i\neq j}|(T^\epsilon_s\ubar x)_i-(T^\epsilon_s\ubar x)_j|.
\end{equation}

\begin{proof}[Proof of \autoref{thm:dp}, sketch.]
	Consider the Lyapunov function
	\begin{equation*}
		\L^\epsilon(\ubar x)=\sum_{i\neq j}G_\epsilon(x_i,x_j)=\sum_{i\neq j}G_\epsilon(x_i-x_j),
	\end{equation*}
	which, unlike $H$, controls the minimum distance between vortices:
	\begin{equation}\label{Lcontroldistance}
		d_t^\epsilon(\ubar x)\leq C \exp\pa{\sup_{s\in[0,t]}|\L^\epsilon(T^\epsilon_t\ubar x)|},
	\end{equation}
	with $C>0$ only depending on $N$.
	Fix $\ubar x$ and denote $x_i^\epsilon(t)=(T^\epsilon_t\ubar x)_i$ for the sake of brevity;
	a straightforward computation gives
	\begin{align*}
		\frac{d}{dt}\L^\epsilon(T^\epsilon_t \ubar x)
		&= \sum_{i\neq j} \nabla G^\epsilon(x_i^\epsilon(t)-x_j^\epsilon(t))\\
		&\cdot \pa{\sum_{k\neq i}\xi_k K^\epsilon (x_k^\epsilon(t)-x_i^\epsilon(t))-
		\sum_{\ell\neq j}\xi_\ell K^\epsilon (x_\ell^\epsilon(t)-x_j^\epsilon(t))}\\
	    &=2 \sum \xi_k \nabla G^\epsilon(x_i^\epsilon(t)-x_j^\epsilon(t)) \cdot K^\epsilon (x_k^\epsilon(t)-x_i^\epsilon(t))
	\end{align*}	
	where the sum in the last line is over triples of indices $(i,j,k)$ such that no pair of them coincide.
	This is due to the essential cancellation $\nabla G_\epsilon\cdot K_\epsilon=0$. As a consequence of this and the
	contruction hypothesis \eqref{gapproxi}, $\L^\epsilon$ is uniformly integrable in $\epsilon$, since
	\begin{equation*}
		\norm{\L^\epsilon}_{L^1(\T^{2\times N})} \leq C \int_{\T^{2\times 3}}\frac1{|x-y|}\frac1{|y-z|}dxdydz<C',
	\end{equation*}
	with $C,C'>0$ depending only on $N,\ubar\xi$. Moreover, the smooth flow $T^\epsilon$ preserves $dx^N$
	(its driving vector field is divergence-free because $\div \nabla^\perp G_\epsilon=0$), 
	and thus there exists $C>0$ independent of $\epsilon$ such that
	\begin{equation*}
		\int_{\T^{2\times N}} \sup_{s\in[0,t]}|\L^\epsilon(T^\epsilon_t\ubar x)| dx^N<C.
	\end{equation*}
	This, in combination with Markov inequality and \eqref{Lcontroldistance} produces the crucial estimate,
	for $C>0$ independent of $c>0$,
	\begin{equation}\label{nocollapseepsilon}
	\abs{\set{d_t^\epsilon(\ubar x)<c}}\leq \frac{CT}{-\log c},	
	\end{equation}
	from which \eqref{nocollapse} follows, since $\set{d_t>c}$ is the almost sure limit of $\set{d_t>\epsilon}$.
	
	On the set $\set{d_t^\epsilon>\epsilon}$ the flow of $B^\epsilon$ and $B$ coincide: sending $\epsilon\rightarrow 0$ we obtain the full-measure
	set $\set{d_t>0}$ on which the flow $T_t(\ubar x)$ of $B$ is well-defined, and intersecting the sets ${d_t>0}$ over a sequence
	of times $t_n\uparrow \infty$ (and one $t_n\downarrow-\infty$) we conclude the proof.	
\end{proof}

Let us stress that the almost surely well-defined flow $T_t(\ubar x)$ of \eqref{pointvortices} produced by \autoref{thm:dp},
is such that for all $t,\epsilon>0$, by definition of $G_\epsilon$, 
\begin{equation}\label{coincidence}
	T^\epsilon_s(\ubar x)=T_s(\ubar x) \quad \forall s\in[0,t], \ubar x\in \set{d_t>\epsilon}.
\end{equation}

\subsection{Functional Analytic Setting}\label{ssec:functional}

This paragraph collects abstract definitions and results we are going to apply to point vortices systems.
We assume knowledge of basic notions in the theory of groups of unitary operators on Hilbert spaces,
for which we refer the reader to \cite[Chapter VIII]{rs75i}.

Let $(X,\F,\mu)$ a standard Borel probability space, \emph{i.e.} $X$ is a Polish space and $\F$ the associated Borel $\sigma$-algebra.
The following results establishes a relation between groups of maps on $X$ and groups of operators on $L^2(\mu)=L^2(X,\F,\mu)$.
Its first part, the easier one, is well known as Koopman's Lemma, whereas the second part, a converse implication,
is a relevant result in Ergodic Theory, for the proof of which we refer to \cite{ggm80}.

\begin{thm}\label{thm:koopman}
	Let the mapping
	\begin{equation*}
		\R\times X\ni (t,x)\mapsto T_t(x)\in X
	\end{equation*}
	be such that: for $\mu$-almost every $x\in X$, $t\mapsto T_t(x)$ is a continuous map; for all $t\in\R$, $x\mapsto T_t(x)$ is
	a $\mu$-almost surely invertible, measurable and measure preserving map and for all $t,s\in\R$
	\begin{equation*}
		T_t\circ T_s(x)=T_{t+s}(x)
	\end{equation*}
	(that is, $(T_t)_{t\in\R}$ is a group). Then
	\begin{equation*}
		L^2(X,\F,\mu)\ni f\mapsto U_t f=f\circ T_t \in L^2(X,\F,\mu)
	\end{equation*}
	defines a strongly continuous group of unitary, positive and unit-preserving operators on $L^2(X,\F,\mu)$
	(\emph{Koopman group}).
	
	Conversely, let $(U_t)_{t\in\R}$ be a strongly continuous group of of unitary, positive and unit-preserving operators
	on $L^2(X,\F,\mu)$ with generator $A$; then there exists a group of $\mu$-almost surely invertible, measurable and measure preserving maps 
	$T_t:X\rightarrow X$, $t\in\R$, such that $U_t f=f\circ T_t$ for all $f\in L^2(X,\F,\mu)$;
	moreover, $t\mapsto T_t(x)$ is weakly measurable for all $t\in\R$.
\end{thm}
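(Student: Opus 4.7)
The first statement is the classical Koopman Lemma, which I would verify by checking each property of $(U_t)$ directly from the pointwise definition $U_t f = f \circ T_t$. Isometry is a consequence of measure-preservation through the change-of-variables formula, $\|U_t f\|_{L^2}^2 = \int |f|^2 \circ T_t \, d\mu = \int |f|^2 \, d\mu$; the group law $T_{t+s} = T_t \circ T_s$ transfers to $U_{t+s} = U_s U_t$; and $\mu$-a.s.\ invertibility of $T_t$ with inverse $T_{-t}$ provides a two-sided inverse $U_{-t}$ for $U_t$, upgrading the isometry to unitarity. Positivity ($f \geq 0 \Rightarrow U_t f \geq 0$) and preservation of the constant function $\mathbf{1}$ are immediate. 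The substantive point is strong continuity: for $f$ in the dense subclass $C_b(X) \cap L^2(\mu)$, available because $X$ is Polish and $\mu$ is Borel, the hypothesis of $\mu$-a.s.\ continuity of $t \mapsto T_t(x)$ yields pointwise convergence $U_t f(x) \to f(x)$ as $t \to 0$, and dominated convergence with bound $\|f\|_\infty$ promotes this to $L^2$-convergence; a $3\varepsilon$/density argument then extends strong continuity to all of $L^2(\mu)$.

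For the converse, which is the substantive content of \cite{ggm80}, my approach would pass through a point-realization theorem for automorphisms of $L^\infty(\mu)$. The first step is to observe that the hypotheses on $(U_t)$ imply that each operator restricts to a normal $\ast$-automorphism of the commutative von Neumann algebra $L^\infty(X, \F, \mu)$: positivity of both $U_t$ and its inverse $U_{-t} = U_t^{-1}$ makes $U_t|_{L^\infty}$ a lattice isomorphism, which together with $U_t \mathbf{1} = \mathbf{1}$ forces preservation of characteristic functions (as extreme idempotents in the unit interval of the lattice) and hence, by extension to simple functions and functional calculus, multiplicativity on all of $L^\infty(\mu)$. Unitarity of $U_t$ on $L^2$ ensures in addition that this automorphism preserves the trace $f \mapsto \int f \, d\mu$.

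The second step is to invoke a Mackey--Halmos type representation theorem: since $(X, \F, \mu)$ is standard, any normal $\ast$-automorphism of $L^\infty(\mu)$ preserving $\int \cdot \, d\mu$ is of the form $f \mapsto f \circ T$ for a $\mu$-a.s.\ invertible, measurable, measure-preserving $T : X \to X$, unique modulo $\mu$-null sets. Applied at each $t$ separately, this produces maps $T_t$ realising $U_t$, and $T_{t+s} = T_t \circ T_s$ holds $\mu$-a.s.\ for each fixed $(s,t)$ by uniqueness and the group law for $U_t$. The main obstacle -- and the technical heart of \cite{ggm80} -- is to upgrade this family, which a priori is defined only modulo a $t$-dependent null set, into a single coherent flow: the exceptional sets from the group identity across pairs $(s,t)$ and from continuity of orbits must be chosen simultaneously so that their union does not grow to positive measure as $t$ ranges over $\R$. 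This is handled by exploiting separability of $L^2(\mu)$, strong continuity of $(U_t)$ restricted to a countable dense set of times (typically $\Q$), and a measurable lifting/disintegration argument converting a.e.-in-$t$ statements into pointwise-in-$x$ statements on a common full-measure set of initial data. The weak measurability of $t \mapsto T_t(x)$ and the a.s.\ continuity of orbits then fall out of this construction.
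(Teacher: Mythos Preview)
The paper does not prove this theorem: it merely states it as a known result, attributing the first part to the classical Koopman Lemma and referring to \cite{ggm80} for the converse implication. There is therefore no proof in the paper to compare your proposal against.

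That said, your sketch is a reasonable outline of the standard arguments. For the forward direction, the verification you describe is correct; the only point worth flagging is that density of $C_b(X)\cap L^2(\mu)$ in $L^2(\mu)$ for a Polish space with a Borel probability measure is indeed a standard fact, so the $3\varepsilon$ argument goes through. For the converse, your route via normal $\ast$-automorphisms of $L^\infty(\mu)$ and a Mackey--Halmos point-realization theorem is one of the usual approaches; your identification of the coherence of the family $(T_t)_{t\in\R}$ across a continuum of times as the technical crux is accurate, and this is precisely what \cite{ggm80} addresses. One small caution: the claim that positivity of $U_t$ and $U_{-t}$ together with $U_t\mathbf{1}=\mathbf{1}$ forces multiplicativity on $L^\infty$ is correct but not entirely trivial---the cleanest argument goes through the characterization of characteristic functions as those $f$ with $0\leq f\leq \mathbf{1}$ and $f^2=f$ (or equivalently $f\wedge(\mathbf{1}-f)=0$), which a positive unital order-isomorphism preserves; you have the right idea but in a full write-up this step would deserve a line or two.
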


\begin{rmk}
	It is worth mentioning that the characterisation of Koopman groups is an important problem in Ergodic Theory.
	We refer to \cite{lem12} for a review on the topic, and to the works \cite{gp76,tel17} for a characterisation
	of Koopman groups in terms of properties of their generators.
\end{rmk}

%

Our aim is to identify a core for the generator of vortex dynamics. This problem is intimately linked
to the one of uniquely extending densely defined symmetric operators and essential self-adjointness.
We recall the following terminology.

\begin{prop}\label{prop:uniquenessterms}
	Consider a symmetric linear operator $(L,D)$ on $L^2(\mu)$; each of the following statements implies the next one:
	\begin{itemize}
		\item \emph{Essential self-adjointness}: the closure of $(L,D)$ is self-adjoint;
		\item $L^2(\mu)$ \emph{uniqueness}: there exists a unique
		one-parameter strongly continuous group of unitaries whose generator extends $(L,D)$.
		\item \emph{Markov uniqueness}: there exists a unique
		one-parameter strongly continuous group of unitaries preserving positivity and unit whose generator extends $(L,D)$.
	\end{itemize}	
\end{prop}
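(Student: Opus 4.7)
The plan is to establish the two implications separately, using Stone's theorem for the first and a set-inclusion argument for the second.

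For essential self-adjointness $\Rightarrow$ $L^2$ uniqueness, I would argue as follows. Suppose $(U_t)_{t\in\R}$ is any strongly continuous one-parameter group of unitaries on $L^2(\mu)$ whose generator $A$ extends $(L,D)$, i.e.\ $U_t=e^{\imm t A}$ and $Af=Lf$ for $f\in D$. By Stone's theorem (see \cite[Chapter VIII]{rs75i}) the operator $A$ is automatically self-adjoint, hence $A$ is a self-adjoint extension of the symmetric operator $(L,D)$. Essential self-adjointness of $(L,D)$ means precisely that its closure $\bar L$ is the unique self-adjoint extension, so $A=\bar L$ and the group $U_t=e^{\imm t\bar L}$ is uniquely determined by $(L,D)$.

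For $L^2$ uniqueness $\Rightarrow$ Markov uniqueness, the point is purely set-theoretic: any strongly continuous one-parameter unitary group that additionally preserves positivity and unit is, in particular, a strongly continuous unitary group. Thus the class of admissible groups appearing in the Markov clause is a subclass of that appearing in the $L^2$ clause, and uniqueness in the larger class immediately implies uniqueness in the subclass.

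No substantial obstacle is expected: both steps reduce to Stone's theorem and careful parsing of the definitions. The only subtle point worth flagging is that each "uniqueness" statement should be read with the existence of at least one admissible extension guaranteed independently — in the setting of the paper this is supplied by \autoref{thm:koopman} applied to the almost-complete measure-preserving flow produced by \autoref{thm:dp}, which yields a Koopman group $U_t$ that is strongly continuous, unitary and preserves positivity and unit.
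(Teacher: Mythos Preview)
Your proposal is correct and matches the paper's own justification, which is not a formal proof but simply the remark following the proposition: ``While the second implication is trivial, the first one is due to Stone's theorem.'' You have fleshed out exactly these two points---Stone's theorem for the first implication and the subclass argument for the second---so there is nothing to add.
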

While the second implication is trivial, the first one is due to Stone's theorem:
any one-parameter strongly continuous groups of unitaries on a Hilbert space $H$ is generated by a self-adjoint operator.
We also recall that the first two definitions coincide if $(L,D)$ is semi-bounded; however this will not be the case in our discussion.

The basic self-adjointness criterion is the following (see \cite[Theorem VIII.1]{rs75i}).

\begin{prop}\label{prop:esaeasy}
	Let $H$ be a complex Hilbert space, $U_t=e^{\imm t A}$ a strongly continuous unitary group on $H$ and $A$ its generator.
	If $D\subset D(A)$ is a dense linear subset such that $U_t(D)\subseteq D$, then $(A|_D,D)$ is essentially self-adjoint and 
	$D$ is a core for $A$, $\obar{A|_D}=A$.
\end{prop}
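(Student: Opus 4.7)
The plan is to prove essential self-adjointness of $L := A|_D$ via the deficiency indices criterion (that is, by showing $\ker(L^*\mp \imm)=\set{0}$), and then deduce that $\obar L = A$ from the fact that any symmetric extension of a self-adjoint operator within a self-adjoint operator must coincide with it. Note first that $L$ is symmetric since $D\subseteq D(A)$ and $A$ is self-adjoint by Stone's theorem.

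For the deficiency index argument, fix $g\in D(L^*)$ with $L^*g=\imm g$ (the case $L^*g=-\imm g$ is analogous) and, for arbitrary $f\in D$, consider the scalar function
\begin{equation*}
    h(t)=\brak{U_t f, g}, \quad t\in\R.
\end{equation*}
The invariance hypothesis $U_t(D)\subseteq D$ is crucial here: it guarantees that $U_t f\in D\subseteq D(A)$ for every $t$, so that $h$ is differentiable with
\begin{equation*}
    \frac{d}{dt}h(t)=\brak{\imm A U_t f, g}=\brak{\imm L U_t f, g}=\imm \brak{U_t f, L^* g}=\imm \brak{U_t f, \imm g}=h(t).
\end{equation*}
Therefore $h(t)=h(0)e^{t}$, whereas unitarity of $U_t$ gives the bound $|h(t)|\leq \|f\|\|g\|$ uniformly in $t\in\R$. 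Letting $t\to+\infty$ forces $h(0)=\brak{f,g}=0$, and since $f\in D$ was arbitrary and $D$ is dense, $g=0$. The same argument with the exponential going to zero as $t\to-\infty$ handles the other sign, so both deficiency subspaces are trivial and $L$ is essentially self-adjoint.

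It remains to identify $\obar L$ with $A$. By construction $\obar L\subseteq A$, and both operators are self-adjoint; applying the adjoint operation to the inclusion yields $A=A^*\subseteq (\obar L)^*=\obar L$, hence $\obar L=A$, which is exactly the statement that $D$ is a core for $A$. The only subtlety in the argument is the invariance of $D$ under $U_t$: without it, the identity $\frac{d}{dt}\brak{U_t f,g}=\brak{U_t f,L^*g}$ cannot be written, since $L^*$ must be paired with vectors in $D$ to be replaced by $L$. All other steps are routine consequences of Stone's theorem and the deficiency index characterisation of essential self-adjointness.
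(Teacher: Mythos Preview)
Your proof is correct and follows essentially the same route as the paper: the paper does not prove this proposition separately (it merely cites Reed--Simon), but its proof of the slightly more general \autoref{prop:esacriterion} is exactly your deficiency-index argument---differentiate $\brak{U_t f,g}$, obtain an exponential, and use unitarity to force $g=0$. The only cosmetic difference is in the final identification $\obar L=A$: you argue via $\obar L\subseteq A$ and self-adjointness of both, while the paper compares the groups $e^{\imm t\obar L}$ and $U_t$ directly.
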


We will in fact use a modified version of this criterion: 
the proof is a standard argument, but we report it for the sake of completeness.

\begin{prop}\label{prop:esacriterion}
	Let $H$ be a complex Hilbert space, $U_t=e^{\imm t A}$ a strongly continuous unitary group on $H$ and $A$ its generator.
	If $D\subset D(A)$ is a dense subset, $L=A|_D$ and
	\begin{equation}\label{esacondition}
		\forall t\in\R,\forall f\in D, \quad U_t f\in D\pa{\obar{L}},
	\end{equation}
	then $(L,D)$ is essentially self-adjoint and $D$ is a core for $A$, $\obar L=A$.
\end{prop}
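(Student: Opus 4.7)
The plan is to prove essential self-adjointness via the basic deficiency-space criterion, showing $\ker(\overline{L}^*\pm i)=\{0\}$, and then deduce $\overline{L}=A$. Since $L=A|_D$ with $A$ self-adjoint, $L$ is automatically symmetric, so $\overline{L}$ is a well-defined closed symmetric operator satisfying $\overline{L}\subseteq A$. Once $\overline{L}$ is shown to be self-adjoint, the identity $\overline{L}=A$ follows from the standard fact that if two self-adjoint operators satisfy $T\subseteq S$ then $T=S$ (pass to adjoints in the inclusion).

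The main step is the deficiency computation. Suppose $g\in\ker(\overline{L}^*+i)$, equivalently $g\perp\mathrm{Range}(\overline{L}-i)$, equivalently $\overline{L}^*g=-ig$. For each $f\in D$, consider $h(t)=\langle U_t f,g\rangle$. Hypothesis \eqref{esacondition} places $U_t f$ in $D(\overline{L})\subseteq D(A)$, so the standard fact that $t\mapsto U_t f$ is $C^1$ in $H$ along $D(A)$ gives $\tfrac{d}{dt}U_t f=iAU_t f=i\overline{L}U_t f$, the last equality using $\overline{L}\subseteq A$. Pairing with $g$ and transferring $\overline{L}$ through the adjoint identity produces $h'(t)=i\langle \overline{L}U_t f,g\rangle=i\langle U_t f,\overline{L}^*g\rangle=i\langle U_t f,-ig\rangle=-h(t)$, so $h(t)=h(0)e^{-t}$. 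But $|h(t)|\leq\|U_t f\|\,\|g\|=\|f\|\,\|g\|$ for every $t\in\R$, and such an exponential is bounded on the whole real line only if $h(0)=\langle f,g\rangle=0$. Density of $D$ in $H$ then forces $g=0$. The case $g\in\ker(\overline{L}^*-i)$ is identical up to a sign and gives $h(t)=h(0)e^{t}$ with the same conclusion.

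The only delicate point, rather than a serious obstacle, is ensuring two-sided boundedness of $h(t)$ on all of $\R$: this is precisely where unitarity of $U_t$ and the fact that \eqref{esacondition} is assumed for every $t\in\R$ (not only $t>0$) enter. The hypothesis $U_t D\subseteq D(\overline{L})$ is tailored exactly to legitimise the chain of equalities above, and its weakness compared to the invariance $U_t D\subseteq D$ of \autoref{prop:esaeasy} is what makes \autoref{prop:esacriterion} applicable in the point-vortex setting, where smooth observables vanishing near the collision set are not mapped to themselves by the flow but only to limits of such functions in the graph norm of $A$.
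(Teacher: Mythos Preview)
Your proof is correct and follows essentially the same deficiency-index argument as the paper: differentiate $t\mapsto\langle U_t f,g\rangle$ for $f\in D$ and $g$ in a deficiency space, use the hypothesis $U_t f\in D(\overline{L})$ to pass from $A$ to $\overline{L}$ and then to $\overline{L}^*=L^*$, and conclude from boundedness of the resulting exponential that $g=0$. The only cosmetic difference is in the final identification $\overline{L}=A$: you invoke maximality of self-adjoint operators (if $T\subseteq S$ are both self-adjoint then $T=S$), whereas the paper compares the generated unitary groups on $D$; both are standard and equivalent here.
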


\begin{proof}
	By \cite[Theorem VIII.3]{rs75i}, if $\ker(L^*\pm\imm)=\set{0}$, then $\obar{L}$ is self-adjoint.
	Assume by contradiction that there exists $f\in D(L^*)$ such that $L^*f=\imm f$ (the case of $L^*f=-\imm f$ is analogous). 
	Then, for all $g\in D=D(L)$ it holds
	\begin{equation}
		\frac{d}{dt} \brak{U_t g,f}_H=\brak{\imm A U_t g,f}_H=\brak{\imm \obar L U_t g,f}_H=\brak{U_t g,f}_H,
	\end{equation}
	where the second passage makes use of the hypothesis $U_t g\in D(\obar L)$, and the last one of $L^\ast =(\obar L)^*$.
	The operator $U_t$ is unitary, so the only solution to the above differential equation for $\brak{U_t g,f}$ in $t$ is the constant $0$,
	and thus, since $g$ varies on the dense set $D$, $f=0$.
	
	We are left to show that $\obar L=A$: this follows easily by differentiating in time $e^{\imm t \obar L}$ on $D$
	and noting that the result coincides by definition with the derivative of $U_t$, so that $U_t=e^{\imm t \obar L}$.
\end{proof}

Let us note that condition \eqref{esacondition} can be rephrased as: for all $t\in\R$ and $f\in D$, there exists a sequence $g_n\in D$
such that
\begin{equation}\label{closedcondition}
	g_n\xrightarrow{n\rightarrow\infty} U_t f, \quad L g_n\xrightarrow{n\rightarrow\infty} \obar{L} U_t f
\end{equation}
in the strong topology of $H$. 

\subsection{The Koopman Group for Point Vortices Systems}\label{ssec:koopvortices}

We denote by $T_t$ the group of transformations of $\T^{2\times N}$ defined in \autoref{thm:dp}
--that is the point vortices flow-- and $U_t$ its associated Koopman group for the remainder of this section.
We now define a first set of observables on which we are able to write explicitly the generator of $U_t$,
and which will turn out to be a core for the generator in the simple case where vortices all have positive (or negative) intensity.

\begin{prop}
	The linear subspace
	\begin{equation*}
	\tilde D =\set{f\in C^\infty(\T^{2\times N}): \supp f \cap \triangle^N=\emptyset}.
	\end{equation*}
	is dense in $L^2(\T^{2\times N})$.
	
	Fix $\ubar \xi\in\R^N$. For any $f\in D$ the following expression is well defined as a function in $L^\infty(\T^{2\times N})$:
	\begin{equation}\label{eq:liouvilleop}
	Lf(\ubar x)=-\imm \sum_{i=1}^N \sum_{j\neq i} \nabla_i f(\ubar x)\cdot \xi_j K(x_i-x_j),
	\end{equation}
	where $\nabla_i f$ denotes the gradient in the $i$-th coordinate of $\T^{2\times N}=(\T^2)^N$.
	
	The operator $(L,D)$ is symmetric; moreover,
	if $A$ is the generator of $U_t$, then $D\subset D(A)$ and $L=A|_D$.
\end{prop}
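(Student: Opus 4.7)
The diagonal $\triangle^N$ is a finite union of codimension-two smooth submanifolds, hence closed with $|\triangle^N|=0$. My plan is the standard two-step approximation: given $f\in L^2(\T^{2\times N})$, first truncate by multiplying with a smooth cutoff $\chi_\epsilon$ vanishing on the $\epsilon$-neighbourhood of $\triangle^N$ and equal to $1$ outside the $2\epsilon$-neighbourhood; dominated convergence shows $\chi_\epsilon f\to f$ in $L^2$ as $\epsilon\to 0$ since $|\{\chi_\epsilon\neq 1\}|\to 0$. Then mollify $\chi_\epsilon f$ at a scale much smaller than $\epsilon$ to land in $\tilde D$.

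\textbf{Well-definedness of $Lf$ and symmetry.} For $f\in\tilde D$ with $\delta:=d(\supp f,\triangle^N)>0$, the factor $K(x_i-x_j)=\nabla^\perp G(x_i-x_j)$ is bounded by $C/\delta$ on $\supp f$ by \eqref{greentorus}, while $\nabla_i f$ is bounded by smoothness, so $Lf\in L^\infty(\T^{2\times N})$ and moreover $\supp Lf\subset\supp f$ stays at distance $\geq\delta$ from $\triangle^N$. For $f,g\in\tilde D$ both vanishing in an open neighbourhood of $\triangle^N$, integration by parts on the torus (no boundary terms) gives
\begin{equation*}
\brak{Lf,g}_{L^2}=-\imm\int B\cdot\nabla f\,\bar g\,dx^N=\imm\int f\,(\div B)\bar g\,dx^N+\imm\int f\,B\cdot\nabla\bar g\,dx^N.
\end{equation*}
The first integral vanishes since $\div B=0$ on the open set $\T^{2\times N}\setminus\triangle^N$ containing $\supp(f\bar g)$; the second equals $\brak{f,Lg}_{L^2}$, using that $B$ is real-valued.

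\textbf{Identification $L=A|_D$.} This is the substantive step. Fix $f\in\tilde D$ with $\supp f\subset\{d(\cdot,\triangle^N)\geq\delta\}$. By \autoref{thm:dp} there is a full-measure invariant set $M$ on which $t\mapsto T_t\ubar x$ is globally defined and smooth with $\dot T_t\ubar x=B(T_t\ubar x)$, the trajectory never touching $\triangle^N$. For $\ubar x\in M$ the function $t\mapsto f(T_t\ubar x)$ is smooth and the chain rule, combined with $L=-\imm B\cdot\nabla$, yields
\begin{equation*}
f(T_t\ubar x)-f(\ubar x)=\int_0^t \nabla f(T_s\ubar x)\cdot B(T_s\ubar x)\,ds=\imm\int_0^t (U_s Lf)(\ubar x)\,ds,
\end{equation*}
the integrand being everywhere bounded by $\norm{Lf}_{L^\infty}$. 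Hence pointwise a.e.\ and in $L^\infty$ (so a fortiori in $L^2$) we get $U_tf-f=\imm\int_0^t U_sLf\,ds$. Strong continuity of the Koopman group (\autoref{thm:koopman}) applied to $Lf\in L^2$ yields $\frac1t\int_0^t U_sLf\,ds\xrightarrow{t\to 0} Lf$ in $L^2$, hence $\frac{U_tf-f}{t}\to\imm Lf$ in $L^2$, which by definition means $f\in D(A)$ and $Af=Lf$.

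\textbf{Where the care is needed.} The only delicate point is that, a priori, the pointwise computation $\frac{d}{dt}f(T_t\ubar x)=\nabla f(T_t\ubar x)\cdot B(T_t\ubar x)$ involves the unbounded field $B$. The resolution has two ingredients which must be combined: (i) the choice of $\ubar x\in M$ ensures the trajectory $T_s\ubar x$ stays in the open set where $B$ is smooth, so the chain rule applies classically; (ii) the fact that $\nabla f$ is compactly supported away from $\triangle^N$ turns $\nabla f\cdot B$ into the bounded function $\imm Lf$, uniformly along the trajectory, which is what allows the integral representation and the subsequent $L^2$ limit. Everything else reduces to standard functional-analytic bookkeeping.
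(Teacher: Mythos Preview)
Your proof is correct and follows the same strategy as the paper, with only minor stylistic variations: the paper proves symmetry by first regularising $K\to K_\epsilon$, integrating by parts, and passing to the limit by bounded convergence (whereas you integrate by parts directly, legitimate since the integrand is supported where $B$ is smooth), and it concludes $L=A|_D$ via pointwise differentiation plus dominated convergence rather than your integral representation plus strong continuity of $s\mapsto U_s Lf$. These are equivalent implementations of the same ideas.
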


For the sake of clarity, we recall that $\supp f$, the \emph{support} of $f$, is the closure of the set of points on which $f\neq 0$.
Let us also introduce the useful notation
\begin{equation}
	\triangle^N_\epsilon =\set{\ubar x\in\T^{2\times N}: d(x_i,x_j)\leq \epsilon},
\end{equation}
and notice that the support of any $f\in D$ and $\triangle^N_\epsilon$ are disjoint for any small enough $\epsilon>0$.

\begin{proof}
	The density statement is straightforward: smooth functions $C^\infty(\T^{2\times N})$ are dense in $L^2(\T^{2\times N})$, 
	so we only need to show that we can approximate in $L^2$-norm the elements of $C^\infty(\T^{2\times N})$ with the ones of $D$.
	This is readily done by means of Urysohn's lemma 
	---or rather its $C^\infty$ version on smooth manifolds, see \cite[Theorem 3.5.1]{co01})---
	which ensures existence of smooth functions $g_\epsilon$ vanishing on $\triangle^N_\epsilon$ and
	coinciding with a given $g\in C^\infty(\T^{2\times N})$ outside $\triangle^N_{\epsilon'}$ for $0<\epsilon<\epsilon'$.
	
	The expression \eqref{eq:liouvilleop} is well-defined for $f\in D$ since $\nabla f$ vanishes in a neighbourhood of $\triangle^N_\epsilon$,
	and moreover 
	\begin{equation*}
		\norm{Lf}_\infty \leq C_{\xi,N} \norm{f}_{C^1(\T^{2\times N})} \pa{\min_{\ubar x \in \supp f}\min_{i\neq j} |x_i-x_j|}^{-1}<\infty,
	\end{equation*}
	because $K(x,y)\sim |x-y|^{-1}$ for $x\rightarrow y$ in $\T^2$.
	
	As for the symmetry: first one replaces $K=\nabla^\perp G$ with the cut off kernel $K_\epsilon=\nabla^\perp G_\epsilon$
	as in \eqref{smoothvectorfield}. Integration by parts and the fact that $K_\epsilon$ is divergence free readily show that
	\begin{equation*}
		\int_{\T^{2\times N}} \nabla_i f(\ubar x)\cdot K_\epsilon(x_i-x_j) g(\ubar x) dx^N=
		-\int_{\T^{2\times N}} \nabla_i g(\ubar x)\cdot K_\epsilon(x_i-x_j) f(\ubar x) dx^N,
	\end{equation*}
	in which we can send $\epsilon\rightarrow 0$ by bounded convergence. Summing up all contributions, multiplying by $\imm$
	and taking into account complex conjugation in the scalar product of $L^2(\T^{2\times N})$ we conclude that $(L,D)$ is symmetric.
	
	It remains to show the following limit in $L^2(\T^{2\times N})$,
	\begin{equation*}
		\lim_{t\rightarrow 0} \frac{U_t f-f}{t}=Lf, \quad \forall f\in D.
	\end{equation*}
	But thanks to \autoref{thm:dp}, for almost every $\ubar x\in \T^{2\times N}$ we have that $U_t f(\ubar x)=f(T_t \ubar x)$
	is a smooth function of $t$ and
	\begin{equation}
		\left. \frac{d}{dt} f(T_t \ubar x)\right|_{t=0}= \sum_{i=1}^N \nabla_i f(\ubar x) \cdot \dot x_i(0)=Lf(\ubar x),
	\end{equation}
	so that we can conclude by bounded convergence.
\end{proof}

Uniqueness of the flow in the almost-everywhere sense of \autoref{thm:dp} already gives us,
by means of \autoref{thm:koopman}, the following uniqueness result.

\begin{prop}
   For any fixed $\ubar \xi\in\R^N$, $(L,D)$ is Markov unique, and 
   it extends to the self-adjoint generator $A$ of $U_t=e^{\imm tA}$, the Koopman group of $T_t$.
\end{prop}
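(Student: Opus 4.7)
The assertion packages two claims: that $(L,D)$ extends to the self-adjoint generator $A$ of the Koopman group $U_t$, and that among all strongly continuous one-parameter groups of unitary, positivity-preserving and unit-preserving operators on $L^2(\T^{2\times N})$, the group $U_t$ is the unique one whose generator extends $(L,D)$. The first claim is already in hand: $U_t$ is a Koopman group by the direct implication in \autoref{thm:koopman}, so its generator $A$ is self-adjoint by Stone's theorem, and the previous proposition yields $D\subset D(A)$ with $L = A|_D$.

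For Markov uniqueness, let $V_t = e^{\imm t B}$ be any other group of the prescribed type with $B\supset L$. The converse implication of \autoref{thm:koopman} produces a $dx^N$-preserving, measurable, almost-surely invertible flow $(S_t)_{t\in\R}$ on $\T^{2\times N}$ with $V_t f = f\circ S_t$, and it suffices to prove $S_t = T_t$ $dx^N$-almost everywhere for every $t$. The plan is to identify $s\mapsto S_s\ubar x$ as a classical solution of the point vortices ODE \eqref{pointvortices} for $dx^N$-a.e. $\ubar x$ and then invoke the uniqueness half of \autoref{thm:dp}. The starting ingredient is the standard semigroup identity
\begin{equation*}
V_t f - f = \int_0^t V_s L f\, ds \quad \text{in } L^2(\T^{2\times N}), \qquad f\in D,
\end{equation*}
which, via $V_s f = f\circ S_s$ and Fubini, gives the pointwise statement
\begin{equation*}
f(S_t\ubar x) - f(\ubar x) = \int_0^t L f(S_s\ubar x)\, ds \qquad \text{for $dx^N$-a.e. } \ubar x.
\end{equation*}
Testing against a countable family $\{f_n\}\subset D$ built from smooth cut-offs of individual Fourier modes in each coordinate $x_i$, supported outside $\triangle^N_{1/k}$, characterises the trajectory $s\mapsto S_s\ubar x$ as a $C^1$ integral curve of \eqref{pointvortices} on any time window during which its orbit stays bounded away from $\triangle^N$.

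The main obstacle is to ensure that for $dx^N$-a.e. $\ubar x$ the orbit $\{S_s\ubar x:s\in[0,t]\}$ in fact avoids $\triangle^N$, without the help of the regularised flow used in the proof of \autoref{thm:dp}. I would reprove the no-collapse estimate \eqref{nocollapseepsilon} directly for $S_s$ by applying the integral identity above to smooth cut-offs $F_\epsilon\in D$ of the Lyapunov function $\L^\epsilon$: the computation of $L F_\epsilon$ exhibits the same cancellation $\nabla G_\epsilon\cdot K_\epsilon = 0$ exploited in the proof of \autoref{thm:dp}, and combined with measure preservation of $S_s$ it yields a uniform-in-$\epsilon$ $L^1$-bound on $\sup_{s\in[0,t]}F_\epsilon(S_s\ubar x)$, hence via Markov's inequality the analogue of \eqref{nocollapse} for $S_s$. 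Letting $\epsilon\to 0$, a full-measure set of initial conditions satisfies $\inf_{s\in[0,t]}\min_{i\neq j}|(S_s\ubar x)_i-(S_s\ubar x)_j|>0$; on this set $s\mapsto S_s\ubar x$ solves \eqref{pointvortices} classically on $[0,t]$, and the uniqueness half of \autoref{thm:dp} identifies $S_t\ubar x$ with $T_t\ubar x$, hence $V_t = U_t$ and $B = A$.
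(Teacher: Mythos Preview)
Your strategy---use the converse half of \autoref{thm:koopman} to turn a competing Markov group $V_t$ into a measurable flow $S_t$, then identify $S_t$ with $T_t$ via the ODE uniqueness in \autoref{thm:dp}---is exactly the route the paper indicates; the paper merely asserts the proposition as a direct consequence of those two theorems and writes no details. Where you go beyond the paper is in recognising that one must actually check that the abstract flow $S_t$ solves \eqref{pointvortices} before invoking uniqueness, and this is the only substantive content.

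Two steps in your sketch need more care. First, the converse Koopman theorem yields only weak measurability of $t\mapsto S_t\ubar x$, not continuity; before you can speak of an orbit ``staying bounded away from $\triangle^N$'' on $[0,t]$ you must manufacture a continuous representative. The integral identity with $Lf$ bounded does force $t\mapsto f(S_t\ubar x)$ to agree a.e.\ with a continuous function for each $f$ in a countable family in $D$, and one can build on that, but the passage from ``a.e.\ $\ubar x$ for each $t$'' to ``for all $t$ simultaneously'' has to be argued. Second, and more seriously, the Lyapunov function $\L^\epsilon$ is smooth and bounded but does \emph{not} vanish near $\triangle^N$, so $\L^\epsilon\notin D$ and you cannot feed it directly into the integral identity for $V_t$. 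With a cut-off $F_\epsilon=\chi_\delta\L^\epsilon\in D$ one gets
\[
LF_\epsilon=\chi_\delta\, L\L^\epsilon+\L^\epsilon\, L\chi_\delta,
\]
and only the first summand inherits the cancellation $\nabla G_\epsilon\cdot K_\epsilon=0$; the second is supported in $\triangle^N_\delta\setminus\triangle^N_{\delta/2}$ and carries $|\nabla\chi_\delta|\sim\delta^{-1}$ against $|B|\sim\delta^{-1}$. One can still control $\|\L^\epsilon L\chi_\delta\|_{L^1}$ using the small measure of that annulus together with the orthogonality $K(x_i-x_j)\perp(x_i-x_j)$ in the dominant term of $B\cdot\nabla\chi_\delta$, and then send $\delta\to 0$ before $\epsilon\to 0$, but this is genuine work you have not shown and without which the no-collapse estimate for $S_s$ does not close.
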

%

Before we move on, in the next section, to identify a core for the generator of the Koopman group in the general case
$\ubar \xi$, let us analyse the simpler case of vortices with positive (equivalently, negative) intensities, $\ubar \xi \in (\R^+)^N$.
This indeed is a simpler case because the energy $H(\ubar x)$ controls the minimum distance of vortices as noted above in \eqref{distancecontrol}

\begin{thm}\label{thm:positivevortices}
	Let $\ubar \xi \in (\R^+)^N$. Then the operator $(L,D)$ is essentially self-adjoint, 
	and its closure coincides with the generator of $U_t$.
\end{thm}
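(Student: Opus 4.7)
The plan is to verify the hypothesis of \autoref{prop:esacriterion}, that is, $U_t f\in D(\obar L)$ for every $f\in D$ and $t\in\R$. Fix $f\in D$ with $\supp f\cap\triangle^N_{\epsilon_0}=\emptyset$ for some $\epsilon_0>0$. The decisive feature of the positive-intensity case is the distance control \eqref{distancecontrol}: since $H$ is continuous on the compact set $\supp f$, the quantity $M_f\coloneqq\sup_{\supp f}|H|$ is finite, and \eqref{distancecontrol} then yields $\delta=\delta(f,\ubar\xi,N)>0$ such that $\{|H|\leq M_f\}$ is disjoint from $\triangle^N_\delta$. By conservation of $H$ along $T_s$, both $U_t f$ and $U_t(Lf)=(Lf)\circ T_t$ are supported almost everywhere outside $\triangle^N_\delta$.

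One would then fix a smooth cutoff $\chi\in C^\infty(\T^{2\times N};[0,1])$ vanishing on $\triangle^N_{\delta/4}$ and equal to $1$ outside $\triangle^N_{\delta/2}$, so that $\chi\equiv 1$ on the essential supports of $U_t f$ and $U_t(Lf)$, and define the approximants
\begin{equation*}
 g_\epsilon\coloneqq\chi\cdot(f\circ T_t^\epsilon)\in D,\qquad 0<\epsilon<\delta/4\wedge\epsilon_0.
\end{equation*}
These belong to $D$ because $T_t^\epsilon$ is smooth and $\chi$ kills a neighborhood of $\triangle^N$. The $L^2$-convergence $g_\epsilon\to U_t f$ comes almost for free: by \eqref{coincidence}, $g_\epsilon$ coincides with $U_t f$ on $\{d_t>\epsilon\}$, while \eqref{nocollapse} ensures $|\{d_t\leq\epsilon\}|\to 0$, and on this bad set both functions are uniformly bounded by $\|f\|_\infty$.

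The main work is showing $L g_\epsilon\to U_t(Lf)$ in $L^2$. Splitting
\begin{equation*}
 L g_\epsilon = -\imm(\nabla\chi\cdot B)(f\circ T_t^\epsilon)+\chi\,L(f\circ T_t^\epsilon),
\end{equation*}
the boundary term is supported in the annulus $\triangle^N_{\delta/2}\setminus\triangle^N_{\delta/4}$, on which $|\nabla\chi|$ and $|B|$ are uniformly bounded by constants depending only on $\delta$; moreover, on $\{d_t>\epsilon\}$ within this annulus, the factor $f\circ T_t^\epsilon=U_t f$ already vanishes by the energy argument, so the boundary term is nonzero only on $\{d_t\leq\epsilon\}$ and goes to $0$ in $L^2$ by \eqref{nocollapse}. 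The main obstacle concerns the remaining term: the factor $\nabla(f\circ T_t^\epsilon)$ involves the Jacobian $DT_t^\epsilon$, whose $L^\infty$-norm generally blows up like $\epsilon^{-1}$. The key observation that bypasses this difficulty is that for $\epsilon<\delta/4\wedge\epsilon_0$, on the set $\{\chi>0\}$ one has $B=B^\epsilon$ and $L_\epsilon f=Lf$, so that the classical transport identity for the smooth flow yields the pointwise equality
\begin{equation*}
 L(f\circ T_t^\epsilon)=L_\epsilon(f\circ T_t^\epsilon)=(L_\epsilon f)\circ T_t^\epsilon=(Lf)\circ T_t^\epsilon
\end{equation*}
on $\supp\chi$, replacing the singular expression by the bounded function $(Lf)\circ T_t^\epsilon$. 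Since $\chi\cdot U_t(Lf)=U_t(Lf)$, the difference $\chi\,L(f\circ T_t^\epsilon)-U_t(Lf)=\chi[(Lf)\circ T_t^\epsilon-(Lf)\circ T_t]$ vanishes on $\{d_t>\epsilon\}$ by \eqref{coincidence} and is bounded by $2\|Lf\|_\infty$ on $\{d_t\leq\epsilon\}$, yielding $L^2$-convergence to $0$. An application of \autoref{prop:esacriterion} then gives the essential self-adjointness of $(L,D)$ and identifies $\obar L=A$.
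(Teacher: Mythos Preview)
Your argument is correct, but it takes a substantially longer route than the paper. The paper applies the \emph{classical} invariance criterion \autoref{prop:esaeasy} rather than \autoref{prop:esacriterion}: once you have established that $\{|H|\leq M_f\}$ is disjoint from $\triangle^N_\delta$, energy conservation tells you that the entire trajectory $\{T_s\ubar x:s\in[0,t]\}$ starting from any $\ubar x$ with $T_t\ubar x\in\supp f$ stays outside $\triangle^N_\delta$. Hence for every $\epsilon<\delta$ one has $f\circ T_t=f\circ T_t^\epsilon$ identically (not just on the set $\{d_t>\epsilon\}$), and since $T_t^\epsilon$ is a global smooth diffeomorphism, $U_t f=f\circ T_t^\epsilon$ is itself a smooth function supported away from $\triangle^N$. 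In other words, $U_t D\subseteq D$, and no approximation $g_\epsilon$, cutoff $\chi$, or passage to $\epsilon\to 0$ is needed at all.

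What your approach buys is generality: the machinery you set up---approximants in $D$, a transport identity to avoid the Jacobian blow-up, and convergence controlled by the measure of $\{d_t\leq\epsilon\}$---is precisely the template required for the mixed-sign case in \autoref{thm:main}, where $U_t D\subseteq D$ genuinely fails. The paper handles that case with a different truncation (the arrested flow $R^\epsilon_t$ of \eqref{arrestedflow}) and a correspondingly enlarged domain $\tilde D$, but the underlying idea of commuting $L$ through the approximate flow via a transport identity is the same as yours. So your proof is a valid warm-up for the general argument; it just does not exploit the special feature of positive intensities, namely that the invariance $U_t D\subseteq D$ holds on the nose.
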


\begin{proof}
	We apply the classical criterion of \autoref{prop:esaeasy} by showing that $D$ is left invariant by $U_t$,
	that is for any $f\in D$ and $t\in\R$ it holds $f\circ T_t\in D$.
	By \eqref{distancecontrol}, it holds
	\begin{equation*}
		\forall t\in\R, \quad \min_{\ubar x \in \supp f}\min_{i\neq j} \abs{(T_t\ubar x)_i-(T_t\ubar x)_j}
		\geq C\exp\pa{-C' \min_{\ubar x \in \supp f} |H(\ubar x)|}>0,
	\end{equation*}
	with $C,C'>0$ constants depending only on $\ubar \xi$ and $N$, 
	so that for any $\epsilon$ smaller than the right-hand side of the latter inequality, $f\circ T_t=f\circ T^\epsilon_t$,
	with $T^\epsilon$ being the flow of \eqref{smoothvectorfield} as above. This implies that $f\circ T_t$ is still a smooth function
	and that its support is disjoint from $\triangle^N$, which concludes the proof.
\end{proof}

\subsection{A Core for the Liouville Operator: The General Case}\label{ssec:core}
As we mentioned above, when vortices intensities $\ubar \xi\in\R^N$ take both positive and negative values, there might exist
initial conditions leading to collapse. More generally, the minimum distance
of vortices along a globally defined trajectory of the flow might be $0$, that is the configuration
might pass arbitrarily close to $\triangle^N$. 

As a consequence, even if for $f\in D$ the support of $f$ has a positive distance from the diagonal $\triangle^N$,
trajectories starting from $\supp f$ can travel arbitrarily close to $\triangle^N$ in finite time,
and $D$ is thus not invariant for $U_t$. Moreover, there is no clue that $U_t$ should preserve $C^\infty$ regularity.

Instead of \autoref{prop:esaeasy}, we rely in this case on \autoref{prop:esacriterion}, which allows us to check
a sort of ``approximate invariance'' of the candidate core.
The key is in choosing the correct approximation of $U_t$, and
a natural choice might be to consider the Koopman group of the flow $T^\epsilon$ of the smoothed vector field $B^\epsilon$:
unfortunately this choice is inadequate to our purposes, see \autoref{rmk:notdp} below.
 
We now define a new set of observables, which we will prove to be a core for $A$, and a truncated flow
that will serve us to check conditions of \autoref{prop:esacriterion}.

\begin{definition}\label{def:core}
	We denote by $\tilde D$ the linear space of functions $f\in L^\infty(\T^{2\times N})$ such that:
	\begin{itemize}
		\item there exists a version of $f$ and a full-measure open set $M\subset \T^{2\times N}$ on which $f|_M\in C^\infty(M)$,
		and moreover $\nabla f|_M\in L^\infty(M)$;
		\item $f$ vanishes in a neighbourhood of $\triangle^N$.
	\end{itemize}
\end{definition}

\begin{prop}
	The linear subspace $\tilde D$ is dense in $L^\infty(\T^{2\times N})$,
	and for any $\ubar \xi\in\R^N$, $f\in \tilde D$ the following expression is well defined as a function in $L^\infty(\T^{2\times N})$:
	\begin{equation}
	Lf(\ubar x)=-\imm \sum_{i=1}^N \sum_{j\neq i} \nabla_i f(\ubar x)\cdot \xi_j K(x_i-x_j).
	\end{equation}
	Moreover, $(L,D)$ is a symmetric operator and if $A$ is the generator of $U_t$, then $D\subset D(A)$ and $L=A|_D$.
\end{prop}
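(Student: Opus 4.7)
My plan is to establish the proposition in three steps: density and well-definedness, symmetry via integration by parts, and identification of $L$ with the restriction of the generator of $U_t$ to $\tilde D$. Density (in the $L^2$ sense, which is what the functional-analytic framework needs) is immediate: the previous proposition already exhibits $\set{f \in C^\infty(\T^{2\times N}): \supp f \cap \triangle^N = \emptyset}$, contained in $\tilde D$, as dense in $L^2$. For $f \in \tilde D$, the vanishing condition provides $\delta = \delta(f) > 0$ with $f \equiv 0$ on $\triangle^N_\delta$; off this set $|K(x_i - x_j)| \leq C/\delta$, and combined with $\nabla f|_M \in L^\infty$ this shows $Lf \in L^\infty$.

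For symmetry I would fix $f, g \in \tilde D$ and pick a common $\delta > 0$ with both vanishing on $\triangle^N_\delta$. Wherever the integrand of $\brak{Lf, g}_{L^2}$ is nonzero, $\bar g \neq 0$ forces $|x_i - x_j| \geq \delta$, so $K(x_i - x_j)$ agrees there with the smooth cutoff $K_\delta$ of \eqref{smoothvectorfield}; substituting $K_\delta$ for $K$ leaves the integrals unchanged. The two clauses of \autoref{def:core} then place $f, g$ in $W^{1,\infty}(\T^{2\times N})$ — each is classically smooth on an open set covering the complement of its vanishing neighbourhood and identically zero inside, so the a.e.\ classical gradient is the weak gradient — and one may integrate by parts against the smooth divergence-free field $K_\delta(x_i - x_j)$ (in $x_i$) to obtain
\[
\int \nabla_i f \cdot K_\delta(x_i - x_j) \, \bar g \, dx^N = -\int f \, K_\delta(x_i - x_j) \cdot \nabla_i \bar g \, dx^N.
\]
Summing over $i$ and $j \neq i$ with weights $\xi_j$, multiplying by $-\imm$, and accounting for complex conjugation yields $\brak{Lf, g}_{L^2} = \brak{f, Lg}_{L^2}$.

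For the identification $L = A|_{\tilde D}$ I would invoke \autoref{thm:dp}: for a.e. $\ubar x$ the orbit $s \mapsto T_s \ubar x$ is a smooth curve avoiding $\triangle^N$. On the part of the orbit lying in the vanishing neighbourhood of $f$, the map $s \mapsto f(T_s \ubar x)$ is identically zero; on the complementary part, the classical chain rule combined with \eqref{pointvortices} gives $\tfrac{d}{ds} f(T_s \ubar x) = \imm Lf(T_s \ubar x)$. The same formula holds in both regimes, so integrating yields
\[
f(T_t \ubar x) - f(\ubar x) = \imm \int_0^t (U_s Lf)(\ubar x) \, ds
\]
for a.e. $\ubar x$. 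Since $Lf \in L^\infty \hookrightarrow L^2$ and $(U_s)$ is strongly continuous on $L^2$,
\[
\norm{\frac{U_t f - f}{t} - \imm Lf}_{L^2} \leq \frac{1}{\abs{t}}\int_0^t \norm{U_s Lf - Lf}_{L^2} \, ds \xrightarrow{t \to 0} 0,
\]
establishing $f \in D(A)$ and $Af = Lf$.

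The step I expect to be least routine is the integration by parts in the symmetry argument: elements of $\tilde D$ are only smooth on a full-measure open set $M$ whose complement could a priori intersect the non-vanishing region, so one has to verify carefully that the two regularity clauses of \autoref{def:core} are compatible in the sense that $f \in W^{1,\infty}(\T^{2\times N})$ — equivalently, that no hidden jump of $f$ survives outside its vanishing neighbourhood of $\triangle^N$, so the a.e.\ classical gradient really is the weak gradient used in the integration by parts against $K_\delta$.
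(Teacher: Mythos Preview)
Your approach mirrors the paper's own ``completely analogous'' proof, and the density and $Lf\in L^\infty$ steps are fine. However, the concern you flag at the end is not a loose end to be tidied up: it is a genuine obstruction, and the same obstruction also infects your generator argument, which you did not flag. Functions in $\tilde D$ need not lie in $W^{1,\infty}(\T^{2\times N})$. Take $f(\ubar x)=\one_{(0,\pi)}(x_{1,1})\,\chi(\ubar x)$ with $\chi\in C^\infty(\T^{2\times N})$ a cutoff vanishing near $\triangle^N$: with $M=\{x_{1,1}\notin\{0,\pi\}\}$ one has $f|_M$ smooth with bounded gradient, so $f\in\tilde D$, yet $f$ jumps across $\Sigma=\{x_{1,1}=0\}$ and your integration-by-parts identity against $K_\delta$ fails for this pair.

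The same $f$ is not in $D(A)$. Your chain-rule step (``on the complementary part, the classical chain rule\ldots'') tacitly assumes $T_s\ubar x\in M$ along the whole orbit, conflating $M^c$ with the vanishing neighbourhood of $\triangle^N$; these are unrelated sets. At generic points of $\Sigma$ the vortex field is transverse to $\Sigma$, so orbits starting within normal distance $\lesssim |t|$ cross $\Sigma$ before time $t$ and pick up the jump. On this set of measure $\sim |t|$ the difference quotient $(U_t f-f)/t$ has size $\sim 1/|t|$, hence $\norm{(U_t f-f)/t}_{L^2}^2\gtrsim 1/|t|\to\infty$ and $f\notin D(A)$. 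Thus both the symmetry of $(L,\tilde D)$ and the inclusion $\tilde D\subset D(A)$ fail for the class $\tilde D$ as written in \autoref{def:core}; neither your argument nor the paper's ``analogous'' claim goes through without an extra hypothesis (for instance $f\in W^{1,\infty}$, or that $M^c$ be flow-invariant). Note finally that the statement as printed reads $(L,D)$ and $D\subset D(A)$, which literally just repeats the earlier proposition; the gap arises only under the intended reading with $\tilde D$ throughout.
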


The proof of the latter Proposition is completely analogous to the one for $D$ above.
The following is the main result of the present paper.

\begin{thm}\label{thm:main}
	Let $\ubar \xi\in [0,\infty)^N$. Then the operator $(L,\tilde D)$ is essentially self-adjoint, 
	and its closure coincides with the generator $A$ of $U_t$.
\end{thm}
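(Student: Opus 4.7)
The plan is to apply \autoref{prop:esacriterion}. Since $\tilde D\subset D(A)$ with $A$ the self-adjoint generator of $U_t$, and $U_t$ commutes with $A$ on $D(A)$, for every $f\in\tilde D$ the orbit $U_tf$ already lies in $D(A)$ with $AU_tf=U_tLf$. The proof therefore reduces to producing, for each $f\in\tilde D$ and $t\in\R$, a sequence $g_\epsilon\in\tilde D$ such that $g_\epsilon\to U_tf$ and $Lg_\epsilon\to U_tLf$ simultaneously in $L^2(\T^{2\times N})$; closedness of $A$ then forces $U_tf\in D(\obar L)$, verifies \eqref{esacondition}, and yields $\obar L=A$.

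For the approximating sequence I would take
\begin{equation*}
	g_\epsilon(\ubar x)=f\pa{\TT^\epsilon_t\ubar x}\,\eta_\epsilon(\ubar x),
\end{equation*}
where $\TT^\epsilon$ denotes a smooth, volume-preserving regularisation of the point-vortex flow coinciding with $T_t$ on $\set{d_t>\epsilon}$---the passage above flags that the D\"urr--Pulvirenti choice \eqref{smoothvectorfield} must be replaced by a more judicious truncation, but only coincidence and measure-preservation properties intervene in the argument below---and $\eta_\epsilon\in C^\infty(\T^{2\times N})$ is a product cutoff $\prod_{i<j}\phi(|x_i-x_j|/\epsilon)$, equal to $1$ outside $\triangle^N_{2\epsilon}$ and vanishing on $\triangle^N_\epsilon$. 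Membership $g_\epsilon\in\tilde D$ is immediate: $f\circ\TT^\epsilon_t$ is smooth on the $\TT^\epsilon_t$-preimage of the regularity set of $f$, a full-measure open set on which it also has bounded gradient, while $\eta_\epsilon$ enforces vanishing in a neighbourhood of $\triangle^N$. The $L^2$-convergence $g_\epsilon\to U_tf$ then follows from the pointwise identity $g_\epsilon=U_tf$ on $\set{d_t>\epsilon}\cap(\triangle^N_{2\epsilon})^c$ granted by \eqref{coincidence}, the uniform boundedness of $f$, and the measure estimates $|\set{d_t\le\epsilon}|\lesssim|\log\epsilon|^{-1}$ from \eqref{nocollapse} together with $|\triangle^N_{2\epsilon}|=O(\epsilon^2)$.

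The hard part will be the graph convergence $Lg_\epsilon\to U_tLf$ in $L^2$. Leibniz gives
\begin{equation*}
	Lg_\epsilon=\eta_\epsilon\,L(f\circ\TT^\epsilon_t)+(f\circ\TT^\epsilon_t)\,L\eta_\epsilon,
\end{equation*}
and on $(\triangle^N_\epsilon)^c$, where $B=B^\epsilon$, the first summand equals $L^\epsilon(f\circ\TT^\epsilon_t)=U^\epsilon_tL^\epsilon f=U^\epsilon_tLf$ by the chain rule for the smooth flow together with the identity $L^\epsilon f=Lf$ (which holds once $\epsilon<\mathrm{dist}(\supp f,\triangle^N)$); this piece converges to $U_tLf$ in $L^2$ by the same flow-coincidence argument. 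The main obstacle is the commutator-like remainder $(f\circ\TT^\epsilon_t)\,L\eta_\epsilon$, concentrated on the annulus $\triangle^N_{2\epsilon}\setminus\triangle^N_\epsilon$ where $|B|$ and $|\nabla\eta_\epsilon|$ are individually of order $\epsilon^{-1}$: the brute bound $|L\eta_\epsilon|=O(\epsilon^{-2})$, integrated on a set of measure $O(\epsilon^2)$, yields only a bounded and not a vanishing $L^2$-contribution. I would rescue the estimate through the same Hamiltonian cancellation that powers the D\"urr--Pulvirenti integrability bound on $\L^\epsilon$: in the pairwise expansion of $B\cdot\nabla\eta_\epsilon$ the leading singular two-body contribution vanishes by the radial-versus-tangential orthogonality $K(z)\perp z$, and the evenness of $g$ in \eqref{greentorus} (whence $\nabla g(0)=0$, so $(x_i-x_j)\cdot\nabla^\perp g(x_i-x_j)=O(|x_i-x_j|^2)$) supplies an additional factor of $\epsilon$, bringing $|L\eta_\epsilon|$ down to $O(1)$ on the single-close-pair region and hence its $L^2$-norm to $O(\epsilon)$. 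Multi-collision regions, of measure $O(\epsilon^{2k})$ when $k$ pairs are simultaneously close, would be controlled by the analogous triple-index reorganisation central to the proof of \autoref{thm:dp}. The resulting estimate $\|(f\circ\TT^\epsilon_t)\,L\eta_\epsilon\|_{L^2}\to 0$ would complete the argument.
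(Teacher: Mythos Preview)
Your overall strategy---invoking \autoref{prop:esacriterion} and producing $g_\epsilon\in\tilde D$ with $g_\epsilon\to U_tf$ and $Lg_\epsilon\to U_tLf$ in $L^2$---is exactly the paper's. The difficulty lies entirely in the choice of $g_\epsilon$, and here your proposal contains a genuine gap.

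The problematic step is the claim that $\|(f\circ\TT^\epsilon_t)\,L\eta_\epsilon\|_{L^2}\to 0$. Your cancellation argument correctly handles the \emph{two-body} contribution: when the factor $\phi(|x_i-x_j|/\epsilon)$ is differentiated and paired with the $(i,j)$-summand of $B_i-B_j$, the orthogonality $K(z)\cdot z=O(|z|^2)$ yields $O(1)$ pointwise on the annulus, hence $O(\epsilon)$ in $L^2$. But the three-body terms
\begin{equation*}
\xi_k\bigl[K(x_i-x_k)-K(x_j-x_k)\bigr]\cdot\frac{x_i-x_j}{|x_i-x_j|}\,\frac{1}{\epsilon}\,\phi'\!\pa{\frac{|x_i-x_j|}{\epsilon}},\qquad k\neq i,j,
\end{equation*}
are \emph{not} confined to multi-collision regions: they live on the full annulus $\epsilon<|x_i-x_j|<2\epsilon$ regardless of where $x_k$ sits. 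Taylor-expanding $K$ (and noting that the first-order term $w^\perp\cdot w$ vanishes but the next one does not) gives a pointwise size of order $|x_i-x_k|^{-2}$, and
\begin{equation*}
\int_{\epsilon<|x_i-x_j|<2\epsilon}\;\int_{|x_i-x_k|>\epsilon}\frac{dx_k\,dx_j}{|x_i-x_k|^{4}}\;\sim\;\epsilon^{2}\cdot\epsilon^{-2}=O(1),
\end{equation*}
so the $L^2$ norm of the commutator stays bounded away from zero. The ``triple-index reorganisation'' of \autoref{thm:dp} does not rescue this: there the cancellation removes the two-index terms and leaves a triple sum that is merely \emph{integrable}; you need it to be \emph{small}, and it is not. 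This is exactly the obstruction the paper records in \autoref{rmk:notdp}, where both the smoothed-flow and cutoff-type approximations are discussed and discarded for producing commutator terms of this kind.

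The paper's remedy is to abandon smooth approximants and take $g_\epsilon=V^\epsilon_tf=f\circ R^\epsilon_t$, where $R^\epsilon_t$ is the \emph{arrested flow}: equal to $T_t$ on the open set $\{d_t>\epsilon\}$ and equal to the identity on $\{d^\epsilon_t<\epsilon\}$. This map is only piecewise smooth (whence the necessity of $\tilde D$ rather than $D$), but it has the decisive property $L(f\circ R^\epsilon_t)=(Lf)\circ R^\epsilon_t$ almost everywhere, because on each piece $R^\epsilon_t$ is either the true flow---which commutes with its own generator---or the identity. No commutator term ever appears, and the convergence $Lg_\epsilon\to U_tLf$ follows from the very same estimate $|\{d_t<\epsilon\}|\to 0$ already used for $g_\epsilon\to U_tf$.
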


Instead of smoothing the driving vector field, we simply stop trajectories of the flow drawing too close to $\triangle^N$.
Since $T^\epsilon:[0,t]\times \T^{2\times N}\rightarrow \T^{2\times N}$ is a smooth function on a compact set,
by definition $d^\epsilon_t:\T^{2\times N}\rightarrow \R$, defined in \eqref{depsilon}, is a continuous function. In particular,
the sets $\set{d^\epsilon_t<c}$, $\set{d^\epsilon_t>c}$ are open subsets of $\T^{2\times N}$.
Moreover, since
\begin{equation*}
	\abs{\bigcup_{c\geq 0}\set{\ubar x: d^\epsilon_t(\ubar x)=c}}=1,
\end{equation*}
the closed sets $\set{d^\epsilon_t=c}$ are negligible for almost all $c\geq 0$.
Let us stress that
\begin{equation*}
	\forall \ubar x\in \set{d^\epsilon_t>\epsilon}=\set{d_t>\epsilon},
	\quad \forall s\in [0,t],
	\quad T^\epsilon_s \ubar x=T_s \ubar x.
\end{equation*}

We define the (lower semicontinuous) function
\begin{equation}\label{deftau}
	\tau_{t,\epsilon}(\ubar x)=
	\begin{cases}
	t &\ubar x\in \set{d_t^\epsilon>\epsilon}=\set{d_t>\epsilon}\\
	0 &\ubar x\in \set{d_t^\epsilon<\epsilon}
	\end{cases},
\end{equation}
and the arrested flow
\begin{equation}\label{arrestedflow}
	R^\epsilon_t\ubar x=T_{\tau_{t,\epsilon}(\ubar x)}\ubar x=
	\begin{cases}
	T_t\ubar x &\ubar x\in \set{d_t^\epsilon>\epsilon}=\set{d_t>\epsilon}\\
	\ubar x &\ubar x\in \set{d_t^\epsilon<\epsilon}
	\end{cases}.
\end{equation}

We can assume without loss of generality that $\abs{\set{d^\epsilon_t=\epsilon}}=0$,
so that \eqref{deftau} and \eqref{arrestedflow} define $\tau_{t,\epsilon},R^\epsilon_t$ on a full-measure open set. Indeed, if $\set{d^\epsilon_t=\epsilon}$
has positive measure, we can redefine $R^\epsilon_t=T_t=T^\epsilon=T^{\epsilon'}$ on $\set{d^\epsilon_t>\epsilon'}$ 
with a slightly larger $\epsilon'>\epsilon$
such that $\set{d^\epsilon_t=\epsilon'}$ is negligible, and the identity outside $\set{d^\epsilon_t>\epsilon'}$:
this does not influence any of the forthcoming arguments.
This being said, we see that $R^\epsilon_t$ has the following properties: for any $\epsilon>0, t\in\R$,
\begin{itemize}
	\item it is a diffeomorphism on the full-measure open set $\set{d^\epsilon\neq \epsilon}$,
	\item it is a discontinuous but measurable transformation of the whole $\T^{2\times N}$,
	\item it is a measure preserving map.
\end{itemize}
Finally, we define the approximating Koopman operators
\begin{equation}
	V^\epsilon_t f(\ubar x)=f(R^\epsilon_t \ubar x)\quad f\in L^2(\T^{2\times N}),
\end{equation}
which are positivity and unit preserving maps taking values in $L^2(\T^{2\times N})$.

\begin{prop}\label{prop:approxi}
	Fix $f\in \tilde D$ and $t\in\R$. Then:
	\begin{itemize}
		\item[(i)] $V^\epsilon_t f \in\tilde D$;
		\item[(ii)] $V^\epsilon_t f$ converges to $U_t f$ in $L^2(\T^{2\times N})$ as $\epsilon\rightarrow 0$;
		\item[(iii)] $A V^\epsilon_t f=L V^\epsilon_t f$ is well-defined since $V^\epsilon_t f\in \tilde D$,
		and it converges to $A U_t f$ in $L^2(\T^{2\times N})$ as $\epsilon\rightarrow 0$.
	\end{itemize}
\end{prop}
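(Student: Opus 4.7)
The plan is to exploit the dichotomy between the open set $\set{d^\epsilon_t > \epsilon} = \set{d_t > \epsilon}$, on which $R^\epsilon_t$ coincides with the smooth flow $T^\epsilon_t = T_t$ by \eqref{coincidence}, and the open set $\set{d^\epsilon_t < \epsilon}$, on which $R^\epsilon_t$ is the identity; their union has full measure, and the D\"urr--Pulvirenti estimate \eqref{nocollapse} provides $\abs{\set{d_t \leq \epsilon}} \to 0$ as $\epsilon \to 0$, driving every convergence argument. For (i), I observe that $V^\epsilon_t f$ equals $f \circ T^\epsilon_t$ on the former set and $f$ on the latter, hence is smooth on a full-measure open set; its gradient is bounded on $\set{d^\epsilon_t > \epsilon}$ by a Gronwall estimate on $DT^\epsilon_t$, valid because $B^\epsilon$ is smooth on the compact manifold $\T^{2\times N}$. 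Vanishing near $\triangle^N$ follows from the trivial inequality $d^\epsilon_t(\ubar x) \leq \min_{i\neq j}|x_i - x_j|$: for any $\eta$ smaller than both $\epsilon$ and the vanishing radius of $f$, the neighbourhood $\triangle^N_\eta$ sits inside $\set{d^\epsilon_t < \epsilon}$, where $V^\epsilon_t f = f = 0$. Item (ii) is then immediate, since $V^\epsilon_t f$ and $U_t f$ agree on $\set{d_t > \epsilon}$ by \eqref{coincidence} and are both pointwise dominated by $\norm{f}_\infty$ elsewhere, so
\[
	\norm{V^\epsilon_t f - U_t f}_{L^2}^2 \leq 4 \norm{f}_\infty^2 \abs{\set{d_t \leq \epsilon}} \xrightarrow{\epsilon \to 0} 0.
\]

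For (iii), the identity $A V^\epsilon_t f = L V^\epsilon_t f$ follows from the preceding proposition once (i) is established. The decisive computation is to match $L V^\epsilon_t f$ with $A U_t f = U_t L f$ on $\set{d^\epsilon_t > \epsilon}$: writing $L = -\imm B \cdot \nabla$ in vector notation and applying the chain rule to $V^\epsilon_t f = f \circ T^\epsilon_t$ yields $L V^\epsilon_t f(\ubar x) = -\imm \pa{DT^\epsilon_t(\ubar x) B(\ubar x)} \cdot \nabla f(T^\epsilon_t \ubar x)$. On $\set{d^\epsilon_t > \epsilon}$ the minimum vortex distance remains above $\epsilon$ along the whole trajectory, hence $B = B^\epsilon$ both at $\ubar x$ and at $T^\epsilon_t \ubar x$; combined with the flow identity $DT^\epsilon_t(\ubar x) B^\epsilon(\ubar x) = B^\epsilon(T^\epsilon_t \ubar x)$, this collapses the expression to $(Lf)(T^\epsilon_t \ubar x) = U_t L f(\ubar x) = A U_t f(\ubar x)$, using that $U_t$ leaves $D(A)$ invariant and commutes with $A$ there. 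On $\set{d^\epsilon_t < \epsilon}$ one has $L V^\epsilon_t f = L f$ while $A U_t f = U_t L f$, both pointwise bounded by $\norm{Lf}_\infty$, so the same measure estimate as in (ii) yields $L^2$ convergence.

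The main obstacle I anticipate is securing the double coincidence $B = B^\epsilon$ at both endpoints of the truncated trajectory in the chain-rule calculation for (iii): this is precisely what truncating the flow (rather than mollifying the vector field as in \cite{dp82}) is engineered to deliver, and explains why the natural approximation via the smooth Koopman group of $T^\epsilon$ would be inadequate here. Everything else reduces to bookkeeping with smooth calculus on the two open pieces and an application of \eqref{nocollapse}.
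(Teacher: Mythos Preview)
Your proposal is correct and follows the same architecture as the paper's proof: the dichotomy between $\set{d^\epsilon_t>\epsilon}$ and $\set{d^\epsilon_t<\epsilon}$, the D\"urr--Pulvirenti measure estimate \eqref{nocollapseepsilon} for (ii), and the key commutation $L V^\epsilon_t f = V^\epsilon_t L f$ for (iii). The only cosmetic difference is in how (iii) is computed: the paper obtains this identity by differentiating $s\mapsto f(R^\epsilon_t T_s\ubar x)$ at $s=0$ and invoking the group law $T_t T_s=T_s T_t$, whereas you reach it via the chain rule and the pushforward identity $DT^\epsilon_t(\ubar x)B^\epsilon(\ubar x)=B^\epsilon(T^\epsilon_t\ubar x)$ combined with $B=B^\epsilon$ along the truncated trajectory; both are standard and equivalent manipulations.
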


\begin{proof}
	Starting from item (i), first of all we notice that $V^\epsilon_t f=f\circ R^\epsilon_t\in L^\infty(\T^{2\times N})$
	because $f\in L^\infty(\T^{2\times N})$. Let $M$ be, as above, the open set on which (a version of) $f$ is smooth, then
	\begin{equation}
		f\circ R^\epsilon_t (\ubar x)=
		\begin{cases}
		f(T_t \ubar x) & \ubar x\in\set{d^\epsilon_t>\epsilon}\cap (T_t^\epsilon)^{-1}M\\
		f(\ubar x) & \ubar x\in \set{d^\epsilon_t<\epsilon}\cap M
		\end{cases}.
	\end{equation}
	The sets on the right-hand side are disjoint since $\set{d^\epsilon_t>\epsilon}\cap \set{d^\epsilon_t<\epsilon}=\emptyset$,
	and open because intersection of open sets. Moreover, since $T^\epsilon_t$ is measure-preserving,
	\begin{equation*}
		\abs{(T_t^\epsilon)^{-1}M}=\abs{M}=1 \quad \Rightarrow \abs{\set{d^\epsilon_t>\epsilon}\cap (T_t^\epsilon)^{-1}M}=
		\abs{\set{d^\epsilon_t>\epsilon}}
	\end{equation*}
	and also $\abs{\set{d^\epsilon_t<\epsilon}\cap M}=\abs{\set{d^\epsilon_t<\epsilon}}$. This shows that $f\circ R^\epsilon_t$
	coincides with a smooth function on a full-measure open set. As for its gradient,
	\begin{equation*}
		\nabla \pa{f\circ R^\epsilon_t}(\ubar x)=
		\begin{cases}
		\nabla f(T^\epsilon_t\ubar x) DT^\epsilon_t(\ubar x) &\ubar x\in\set{d^\epsilon_t>\epsilon}\cap (T_t^\epsilon)^{-1}M\\
		\nabla f(\ubar x) & \ubar x\in \set{d^\epsilon_t<\epsilon}\cap M
		\end{cases},
	\end{equation*}
	where $\norm{DT^\epsilon_t(\ubar x)}_\infty<\infty$, and thus $\nabla\pa{V^\epsilon_t f}\in L^\infty(\T^{2\times N})$
	since $\nabla f\in L^\infty(M)$. 
	By definition, $R^\epsilon_t(\ubar x)=\ubar x$ on $\set{d^\epsilon_t<\epsilon}$, which is a neighbourhood of $\triangle^N$,
	since it contains all $\triangle^N_{\epsilon'}$ for $\epsilon'<\epsilon$; thus on the intersection of $\set{d^\epsilon_t<\epsilon}$
	and the neighbourhood of $\triangle^N$ on which $f$ vanishes, so must vanish also $V^\epsilon_t f$, concluding item (i).
	
	Item (ii) follows directly from \eqref{nocollapseepsilon} and the fact that $U_t$ is unit preserving:
	\begin{align*}
		\norm{U_t f- V^\epsilon_t f}_{L^2}^2&=\int_{\set{d_t^\epsilon<\epsilon}} \abs{U_t f(\ubar x)-f(\ubar x)} dx^N\\
		&\leq 2\norm{f}_\infty^2 \abs{\set{d_t^\epsilon<\epsilon}}
		\leq \frac{C t \norm{f}_\infty^2}{-\log \epsilon}.
	\end{align*}
	
	Let us now consider how the generator $A$ acts on $V^\epsilon_t f$. By definition,
	\begin{equation*}
		AV^\epsilon_t f(\ubar x)= \left.\frac{d}{ds}\right|_{s=0} U_s V^\epsilon_t f(\ubar x)
		=\left.\frac{d}{ds}\right|_{s=0} f(R^\epsilon_t T_s \ubar x).
	\end{equation*}
	For a fixed $\ubar x$ in the open set $\set{d^\epsilon_t>\epsilon}$, $T_s\ubar x$ is well-defined for $s$ in a neighbourhood of $0$, 
	and it is a smooth function in such time interval. 
	Thus, for small enough $s$ depending on the $\ubar x$ we are fixing, $T_s\ubar x\in \set{d^\epsilon_t>\epsilon}$,
	and the same is true if $\ubar x \in \set{d^\epsilon_t\epsilon}\setminus\triangle^N$ (we are removing the closed negligible diagonal $\triangle^N$).
	As a consequence, for all $\ubar x \in\set{d^\epsilon_t>\epsilon}$,
	\begin{equation*}
		\left.\frac{d}{ds}\right|_{s=0} f(R^\epsilon_t T_s \ubar x)=\left.\frac{d}{ds}\right|_{s=0} f(T_t T_s \ubar x)
		=\left.\frac{d}{ds}\right|_{s=0} f(T_s T_t \ubar x)=Lf(T_t\ubar x)=U_tLf(\ubar x),
	\end{equation*}
	and analogously for $\ubar x \in \set{d^\epsilon_t\epsilon}\setminus\triangle^N$,
	\begin{equation*}
		\left.\frac{d}{ds}\right|_{s=0} f(R^\epsilon_t T_s \ubar x)=\left.\frac{d}{ds}\right|_{s=0} f(T_s \ubar x)
		=Lf(\ubar x).
	\end{equation*}
	We thus see that, for $\ubar x$ in a full-measure set,
	\begin{equation*}
		LV^\epsilon_t f(\ubar x)=V^\epsilon_t Lf.
	\end{equation*}
	Since a strongly continuous unitary group always commutes with its generator on the domain of the latter,
	and since $Af=Lf$ for $f\in \tilde D$,
	\begin{align*}
		\norm{A U_t f- A V^\epsilon_t f}_{L^2}^2&=\norm{U_t L f- V^\epsilon_t L f}_{L^2}^2
		=\int_{\set{d_t^\epsilon<\epsilon}} \abs{U_t L f(\ubar x)-L f(\ubar x)} dx^N\\
		&\leq 2\norm{L f}_\infty^2 \abs{\set{d_t^\epsilon<\epsilon}}
		\leq \frac{C t \norm{f}_{C^1}^2}{-\log \epsilon},
	\end{align*}
	where $C$ is a constant depending on $N$ and $\ubar \xi$. This concludes the proof of (iii).	
\end{proof}

\autoref{thm:main} is a direct consequence of \autoref{prop:esacriterion} and \autoref{prop:approxi}.
Indeed, for fixed $f\in \tilde D$ and $t\in\R$, the we have shown that $V^\epsilon_t f$ satisfies condition \eqref{closedcondition},
and thus $\tilde D$ is a core for $A$.

\subsection{Considerations on unsuccessful approaches}\label{rmk:notdp}

In the proof of \autoref{thm:main} we use in an essential way the peculiar structure of our approximating flow $R^\epsilon_t$
in items (i) and (iii), while (ii) still holds true if we replace $U_tf$ with $U^\epsilon_t f=f\circ T^\epsilon_t$,
the approximating flow of \cite{dp82}, for any smooth $f\in D$. There are two reasons why we are not able to
treat the latter setting. 

Using the fact that $T^\epsilon_t$ is smooth one can show with some care that $U^\epsilon_t$ preserves $D$.
This and estimate \eqref{nocollapseepsilon} would show that $D$ is a core for $A$ provided that we can also show that $AU^\epsilon_t f$
strongly converges to $AU_tf$ for fixed $f\in D,t\in\R$ (\emph{cf.} \autoref{prop:approxi}).
Since $U_tf$ and $U^\epsilon_tf$ coincide on $\set{d^\epsilon_t>\epsilon}$, we only need to evaluate their
difference on $\set{d^\epsilon_t\leq\epsilon}$. The set over which we integrate has small measure $t\log\pa{\frac1\epsilon}$,
but if we try to bound $LU^\epsilon_tf$ uniformly in $\ubar x$ ($LU_tf=U_t Lf$ is uniformly bounded since $Lf$ is),
we are led to control terms including $\norm{DT^\epsilon_t}_\infty$: since the vector field $\norm{B^\epsilon}_\infty\simeq \epsilon^{-2}$,
we get $\norm{DT^\epsilon_t}_\infty\simeq e^{C \epsilon^{-2}}$, which is way too large to be compared with the measure of the integration set.
Considering estimates in $L^2$ or $L^p$ norms does not seem to solve the issue, either.

We have seen above how an abrupt truncation of the flow allows us to show that $\tilde D$ is a core for $A$,
and it is clear that allowing functions of lower regularity was necessary to employ this kind of approximation.
We further mention only one more smooth approach. One might define the vector field
\begin{equation}
	B^\delta(\ubar x)= M^\delta(\ubar x)B(\ubar x),
\end{equation} 
with $M^\delta\in C^\infty(\T^{2\times N})$ vanishing on a $\delta$-neighbourhood of $\triangle^N$ and taking value $1$
far from it. The Koopman operators $V^\delta_t$ of its associated flow would preserve $D$ and strongly converge to $U_t$;
however, $L$ and $V^\delta_t$ would not commute unless $M^\delta$ is a first integral of the motion.
As there can not be invariants of the vortex motions controlling the minimum distance of vortices (as $M^\delta$ would do)
in the case of coexisting positive and negative vortices, we would not be able to continue the proof as we did
for \autoref{thm:main}, and thus have to face explicit computations, in which the difficulties of the same kind of 
the ones outlined above appear.

\section{Generalisations}\label{sec:generalisations}

\subsection{Point Vortices on the Sphere}\label{ssec:vorticessphere}
All the arguments above still work with almost no modifications when the torus $\T^2$ is replaced with a smooth compact surface
with no boundary, such as the sphere $\S^2=\set{x\in\R^3:|x|=1}$ (to be regarded as an embedded surface). On $\S^2$, $d\sigma$ is the Riemannian volume,
so that $\int_{\S^2}d\sigma=1$, and $x\cdot y,x\times y$ respectively denote scalar and vector products in $\R^3$.
Euler equations on $\S^2$ are given by, for $x\in\S^2$,
\begin{equation*}
\begin{cases}
\partial_t \omega(x,t)=x\cdot \pa{\nabla\psi(x,t)\times\nabla\omega(x,t)},\\
-\Delta \psi(x,t)=\omega(x,t).
\end{cases}
\end{equation*}
Here $\Delta$ is the Laplace-Beltrami operator, and we have to supplement the Poisson equation for the \emph{stream function} $\psi$
with the zero average condition. The Green function of $-\Delta$ is simply given by
\begin{equation*}
-\Delta G(x,y)=\delta_y(x)-1, \quad G(x,y)=-\frac1{2\pi} \log |x-y|+c,
\end{equation*}
$c\in\R$ a universal constant making $G$ zero-averaged. To satisfy in weak sense Euler equations, the point vortices vorticity distribution 
$\omega=\sum_1^N \xi_i \delta_{x_i}$ must evolve according to
\begin{equation}\label{vorticessphere}
\dot x_i=\frac1{2\pi}\sum_{i\neq j}^N\xi_j \frac{x_j\times x_i}{|x_i-x_j|^2},
\end{equation}
which is still a Hamiltonian system with
\begin{equation*}
H(x_1,\dots x_N)=\sum_{i< j}^N \xi_i\xi_j G(x_i,x_j).
\end{equation*}
In fact, setting $K(x,y)=\frac1{2\pi}\frac{x\times y}{|x-y|^2}$, \eqref{vorticessphere} takes the same form of \eqref{pointvortices},
and $K$ is still a skew-symmetric, divergence free function on $\S^2$ (divergence being the adjoint of the gradient operator on functions of $\S^2$).
We refer to \cite{PoDr93} for a more complete discussion of this setting.
It is easy to see that all the features we relied on in \autoref{sec:main} are still present:
\begin{itemize}
	\item[(i)] the flow is locally well posed when positions of vortices do not coincide, and it is measure-preserving because of the
	Hamiltonian nature of the equations;
	\item[(ii)] the crucial cancellation leading to integrability of the Lyapunov function $\L(x_1,\dots x_N)=\sum_{i\neq j} G(x_i,x_j)$
	and thus required for the proof of \autoref{thm:dp} to work still takes place, in this case because $(x\times y)\perp (x-y)$
	for any $x,y\in\S^2$;
	\item[(iii)] as a consequence, the almost-surely well defined point vortices flow $T_t$ coincide with a smooth one on open sets of large measures,
	so that we can implement again the strategy of \autoref{ssec:core}.
\end{itemize}

\subsection{Point Vortices on Bounded Domains}
Let $\D\subset \R^2$ be a bounded domain with smooth boundary and Lebesgue measure $|\D|=1$, $G(x,y)$ the Green
function of $-\Delta$ on $\D$ with Dirichlet boundary conditions, which can be represented as the sum of its free version
$G_{\R^2}(x,y)=-\frac{1}{2\pi}\log|x-y|$ and the harmonic extension in $\D$ of the values of $G_{\R^2}$ on $\partial \D$,
\begin{equation}\label{greendomain}
G(x,y)=-\frac{1}{2\pi}\log|x-y|+g(x,y), \quad 
\begin{cases}
\Delta g(x,y)=0 & x\in \D\\
g(x,y)=\frac{1}{2\pi}\log|x-y| &x\in \partial D	
\end{cases}
\end{equation}
for all $y\in \D$. Both $G$ and $g$ are symmetric, and maximum principle implies that
\begin{equation}\label{harmoniccontbound}
-\frac{1}{2\pi}\log(d(x)\vee d(y))\leq g(x,y)\leq \frac{1}{2\pi} \log\diam(\D),
\end{equation}
with $d(x)$ the distance of $x\in \D$ from the boundary $\partial \D$.

The motion of a system of $N$ vortices with intensities $\xi_1,\dots,\xi_N\in\R$ and positions $x_1,\dots,x_N\in \D$
is governed by the Hamiltonian function
\begin{equation*}
H(x_1,\dots,x_n)=\sum_{i< j}^N \xi_i\xi_j G(x_i,x_j)+\sum_{i=1}^{N}\xi_i^2 g(x_i,x_i),
\end{equation*}
leading to the system of equations
\begin{equation*}
	\dot x_i(t)=\sum_{j\neq i}^N \xi_j \nabla^\perp G(x_i(t),x_j(t))+\xi_i^2 \nabla^\perp g(x_i,x_i).
\end{equation*}
The additional (with respect to the cases with no boundary) self-interaction terms involving $g$ are due to the presence of
an impermeable boundary: it is thanks to these terms that the system satisfies (in weak sense) Euler's equations. 
We refer to \cite[Section 4.1]{mp94} for a thorough motivation of this fact.

In this setting, the relevant features (i)--(iii) we individuated in the last paragraph are still present,
but the boundary enters as an additional singular set of the vector field, and thus our arguments
must take it into account. Without going into details, we just mention the relevant required modifications:
\begin{itemize}
	\item the smoothed vector field $B^\epsilon$ and its associated flow $T^\epsilon$ of \autoref{ssec:collisions}
	must be defined by smoothing both $\log|\cdot|$ and $g$ in \eqref{greendomain}: $B^\epsilon$ will coincide with the 
	original vortices vector field $B$ whenever vortices are at least $\epsilon>0$ apart from each other and from the boundary;
	\item functions of $D$ must now satisfy $\supp f\subset \D^N\setminus \triangle^N$, where $\triangle^N$ is the
	diagonal set of $\D^N$ (the definition being the same as in the torus case), whereas function of 
	$\tilde D$ must vanish not only around $\triangle^N$, but also in a neighbourhood of $\set{\ubar x\in \obar{\D}^N: \exists i: x_i\in \partial \D}$.
\end{itemize}

\subsection{Gibbsian Ensembles and Vortices on the Whole Plane}
We now return to point vortices on $\T^2$. Besides the uniform measure $dx^N$ on $\T^{2\times N}$,
the point vortices flow $T_t$ also preserves all \emph{(Canonical) Gibbs measures} defined by
\begin{equation}
	 d\mu_{\beta,N}(\ubar x)= \frac1{Z_{\beta,N}}e^{-\beta H(\ubar x)}dx^N, \quad 
	 Z_{\beta,N}=\int_{\T^{2\times N}}e^{-\beta H(\ubar x)}dx^N.
\end{equation}
In fact, the above density is integrable as soon as $|\beta|\leq C(N,\ubar \xi)$, $C>0$ being a constant depending only on $N$
and the intensities, and the \emph{partition function} $Z_{\beta,N}>0$ is there to make $\mu_{\beta,N}$ a probability measure,
see \cite{GrRo19,lionsbook}.

When $\beta>0$, $\mu_{\beta,N}$ gives more weight to configurations where vortices of the same sign are far from each other,
but positive and negative vortices are close. Vice-versa, if $\beta<0$, vortices of the same sign tend to cluster.
Invariance of $\mu_{\beta,N}$ is an easy consequence of the one of $H(\ubar x)$ and $dx^N$, and can be achieved by
considering the smoothed vortices interaction $B^\epsilon$ with Hamiltonian $H^\epsilon$ and sending $\epsilon\rightarrow 0$.

Whatever $\beta$ is, since $\mu_{\beta,N}$ is absolutely continuous with respect to $dx^N$, the flow $T_t$ is still
globally well-defined on a full-measure set. However, the density of $\mu_{\beta,N}$ is singular in $\triangle^N$
(save for trivial cases), 
so uniform integrability of the Lyapunov functions $\L^\epsilon$ in \autoref{thm:dp} is spoiled.
As a consequence, the arguments in \autoref{ssec:core} also fail.

Let us now spend a few words on point vortices on $\R^2$. The system is given by \eqref{pointvortices} with
$G(x)=-\frac1{2\pi}\log|x|$, and it is well-posed for almost all initial conditions with respect to the product
Lebesgue measure provided that no subset of the intensities $\set{\xi_1,\dots \xi_N}$ sums to zero, see \cite{mp94}.
The latter condition ensures that vortices can not travel to infinity in finite time.

The product Lebesgue measure on $\R^{2\times N}$ is not a probability measure, so we are led to look for an integrable density on $\R^2$
left invariant by the dynamics. To the best of our knowledge, this is only achieved by the Gaussian measure
\begin{align*}
	d\mu_{\alpha,\eta,N}(\ubar x)&=\frac1{Z_{\alpha,\eta,N}}e^{-\eta\cdot M(\ubar x)-\alpha I(\ubar x)}dx^N, 
	\quad \eta\in\R^2, \alpha\in\R^+,\\
	M(\ubar x)&=\sum_{i=1}^N \xi_i x_i, \qquad I(\ubar x)=\sum_{i=1}^N \xi_i |x_i|^2,
\end{align*}
when \emph{all vortices are positive}, $I$ and $M$ being first integrals of vortices motion, the \emph{moment of inertia}
and \emph{centre of vorticity} (see \cite[Section 5.3]{lionsbook}). The interaction energy $H$ can be
also added to the Gibbs exponential, but this is not a substantial modification.
As we have seen above, the case of positive vortices can be dealt with by exploiting conservation of energy,
so we shall not discuss it further. Unfortunately, the more interesting case of arbitrary signs
seems to be impossible to include in our discussion.

\subsection{The Configuration Space and Non-Uniqueness}\label{sec:configurationspace}

In the point vortices time evolution, the number and intensities of vortices are constant --- at least when no vortices collide, as we will see.
As a consequence, everything we said still applies if instead of fixing $N,\ubar \xi$ we choose them at random,
provided that all objects are well defined. In order to discuss an arbitrary number of vortices,
one can consider the phase space
\begin{equation*}
 \bigcup_{N\geq 0} (\T^2\times \R)^N,
\end{equation*}
on which, conditioned to the random choice of $N$, to be made for instance with a sample of a Poisson distribution,
we consider the product measures $dx^N\otimes \nu^{\otimes N}$, with $\nu$ the probability law of a single intensity $\xi_i\in\R$.

An equivalent (up to symmetrisation of products) point of view is the \emph{configuration space} setting, 
in which one looks at the law of the vorticity distribution
$\omega=\sum_{i=1}^N \xi_i \delta_{x_i}$ (the empirical measure of vortices) under the law of the aforementioned ensemble of vortices.
This is the approach of \cite{af03}.
Let us define
\begin{equation*}
\Gamma=\bigcup_{N\geq 0}\Gamma_N, 
\quad \Gamma_N=\set{\gamma=\sum_{i=1}^N \xi_i \delta_{x_i}:\xi_i\in\R, x_i\in\T^2, x_i\neq x_j\text{ if }i\neq j},
\end{equation*}
to be regarded as a subset of finite signed measures $\M(\T^2)$.
There is a one-to-one correspondence between elements of $\Gamma_N$ and classes of equivalence of $(\T^2\times \R)^N$
up to permutations. 
Let $\nu$ be a probability measure on $\R$ with finite second moment and $\lambda>0$;
we define the measure $\mu_N$ on $\Gamma_N$ as the image of $dx^N\otimes \nu^{\otimes N}$ on $(\T^2\times \R)^N$ under the aforementioned correspondence,
and then we define $\mu$ on $\Gamma$ as
\begin{equation*}
	\mu=e^{-\lambda}\sum_{N\geq 0} \frac{\lambda^N}{N!}\mu_N.
\end{equation*}
Equivalently, $\mu$ can be realised by considering a Poisson point process on $\T^2\times \R$ with intensity measure $\lambda dx\otimes d\nu$,
the samples of which are vectors $\pa{x_1,\xi_1,\dots x_N,\xi_N}$, and setting $\mu$ to be the image law under 
the map $\gamma=\sum_{i=1}^N \xi_i \delta_{x_i}$. We refer to \cite{AlKoRo98} for a complete discussion of Poisson processes and the configuration space.

By \autoref{thm:dp}, for $\mu$-almost every $\gamma\in \sum_{i=1}^N \xi_i \delta_{x_i}\Gamma$ the point vortices flow
with initial positions $x_i$ and intensities $\xi_i$ is globally well-defined. Moreover, the flow defines a group of invertible
measurable maps $\TT_t:\Gamma\rightarrow\Gamma$, the cursive to distinguish it from the flow $T_t$ on $L^2(\T^{2\times N})$ in \autoref{sec:main}.
The map $\TT_t$ preserves $\mu$ since it leaves each $\Gamma_N$ invariant, 
and for fixed $N$ the point vortices evolution does not change intensities and preserves the
product measure on the torus.

The main contribution of \cite{af03} is an explicit expression of the generator of the Koopman group $U_t$ on $L^2(\Gamma,\mu)$ associated to $\TT_t$,
on the set of cylinder functions on Fourier modes. In order to comment the problem of essential self-adjointness in this setting
we now repeat their result: we do so perhaps in a more concise way, by means of an important symmetrisation first introduced in the works of
Delort and Schochet \cite{del91,sch95} to give meaning to weak solutions of Euler equations in low regularity regimes.

Indeed, the weak formulation for Euler equations in vorticity form \autoref{euler} against a smooth test $\phi\in C^\infty(\T^2)$ is given by
\begin{align*}
	\brak{\phi,\omega_t}-\brak{\phi,\omega_0}
	&=\int_0^t \int_{\T^{2\times 2}} K(x-y)\omega_s(y) \omega_s(x) \nabla\phi(x) dxdyds\\
	&=\int_0^t \brak{(K\ast \omega_s)\omega_s,\nabla\phi}ds,
\end{align*}
where $\brak{\cdot,\cdot}$ denotes the inner product of $L^2(\T^2)$. For smooth solutions of Euler equations, 
one can symmetrise the variables $x,y$ ---keeping in mind that $K$ is skew-symmetric--- to obtain the expression
\begin{align}\label{dssymm}
	\brak{\phi,\omega_t}-\brak{\phi,\omega_0}&=\int_0^t \int_{\T^{2\times 2}} H_\phi(x,y)\omega_s(x)\omega_s(y)dxdyds\\ \nonumber
	&=\int_0^t \brak{H_\phi,\omega_s\otimes\omega_s}ds,\\
	H_\phi(x,y)&=\frac12 (\nabla\phi(x)-\nabla\phi(y))\cdot K(x-y), \quad x,y\in\T^2,
\end{align}
where $H_\phi(x,y)$ is a symmetric function with zero average in both variables and smooth outside the diagonal set $\triangle^2$,
where it has a jump discontinuity. Because of this, by interpreting brackets $\brak{\cdot,\cdot}$ as suitable duality couplings,
one can give meaning to Euler equations when vorticity $\omega$ has low space regularity.

One such example is the point vortices system: as detailed in \cite{scho96}, 
the empirical measure $\omega=\sum_{i=1}^N \xi_i \delta_{x_i}$
with $x_i$ evolving as in \eqref{pointvortices} satisfies \eqref{dssymm} if we assume that $H_\phi(x,x)=0$,
thas is if we neglect self-interactions of vortices. 
More precisely, brackets $\brak{H_\phi,\cdot}$ are to be interpreted as duality couplings between continuous functions and
measures on $\T^{2\times 2}\setminus\triangle^2$.
We will discuss this further below, here we also recall that \eqref{dssymm}
was also used as an alternative to the Fourier approach of \cite{ac90} to give meaning to white noise and Poissonian solutions in \cite{flandoli,Gr19}.

We define \emph{local observables} on $\Gamma$ as the family $\F$ of functions of the form
\begin{equation}\label{local}
	F(\gamma)=f(\brak{\phi_1,\gamma},\dots \brak{\phi_n,\gamma}),
\end{equation}
where $f\in C_c^\infty(\C^n,\C)$ and $\phi_1,\dots \phi_n\in C^\infty(\T^2,\C)$, the brackets $\brak{\cdot,\cdot}$ denoting
coupling of continuous functions and measures. In \cite{af03} the functions $\phi_k$ were chosen in the Fourier orthonormal basis,
but this would not change anything in our discussion.

\begin{prop}
	Let $U_t$ be the Koopman group on $L^2(\mu)$ associated with $\TT_t$, and $\A$ be its generator.
	For any $F\in\F$ of the form \eqref{local}, the following expression defines an observable in $L^2(\mu)$,
	\begin{equation}
		\L F(\gamma)=-\imm \sum_{k=1}^n \partial_k f(\brak{\phi_1,\gamma},\dots \brak{\phi_n,\gamma})
		\brak{H_{\phi_k},\gamma\otimes\gamma}.
	\end{equation}
	The operator $(\L,\F)$ is symmetric, $\F\subseteq D(\A)$, and $\A|_\F=\L$. Moreover, $(\L,\F)$ is Markov unique,
	that is $\A$ is the unique self-adjoint extension generating a strongly continuous, positivity and unit preserving group
	of unitaries, which is $\U=e^{\imm t\A}$.
\end{prop}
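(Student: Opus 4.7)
My strategy is to reduce the statement to the fixed-$N$ results of \autoref{sec:main} by decomposing according to the number of vortices, which is conserved by $\TT_t$. Concretely, $N:\Gamma\to\N$ is $\mu$-a.e.\ well-defined and $\TT_t$-invariant, giving an orthogonal decomposition $L^2(\mu)=\bigoplus_{N\geq 0}L^2\pa{\Gamma_N,e^{-\lambda}\tfrac{\lambda^N}{N!}\mu_N}$, and under the correspondence with symmetric functions on $(\T^2\times\R)^N$ the restriction of $\TT_t$ to $\Gamma_N$ is the point vortices flow of \autoref{thm:dp} with random intensities $\ubar\xi\sim\nu^{\otimes N}$. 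First I would verify that $\L F\in L^2(\mu)$: since $\nabla\phi_k$ is Lipschitz and $|K(x-y)|\leq C|x-y|^{-1}$, the Delort--Schochet kernel $H_{\phi_k}$ extends continuously to $\triangle^2$ with value $0$, hence belongs to $L^\infty(\T^{2\times 2})$; therefore $\abs{\brak{H_{\phi_k},\gamma\otimes\gamma}}\leq\norm{H_{\phi_k}}_\infty\pa{\sum_i|\xi_i|}^2$, and $L^2(\mu)$-integrability follows from the Poisson moment formula using that $\nu$ has a finite second moment and $f\in C_c^\infty$.

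Symmetry of $(\L,\F)$ and the identity $\A|_\F=\L$ would then be proved fiber-wise. On $\Gamma_N$, associate to $F\in\F$ the symmetric smooth function $\tilde F(\ubar x)=f\pa{\brak{\phi_1,\gamma(\ubar x)},\dots,\brak{\phi_n,\gamma(\ubar x)}}$; the chain rule gives
$$\nabla_i\tilde F(\ubar x)\cdot\sum_{j\neq i}\xi_j K(x_i-x_j)=\sum_{k=1}^n \partial_k f\,\xi_i\sum_{j\neq i}\xi_j\nabla\phi_k(x_i)\cdot K(x_i-x_j),$$
and summing in $i$ then symmetrising $i\leftrightarrow j$ via skew-symmetry of $K$ reproduces $\sum_k\partial_k f\cdot\brak{H_{\phi_k},\gamma\otimes\gamma}$. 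Thus $(\L F)|_{\Gamma_N}$ coincides with the fixed-$N$ Liouville operator of \autoref{ssec:koopvortices} evaluated at $\tilde F$, and symmetry, $\F\subset D(\A)$, and $\A|_\F=\L$ follow from their fixed-$N$ counterparts combined with dominated convergence in $L^2(\mu)$ enabled by the $L^\infty$ bound on $H_{\phi_k}$.

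The main obstacle is Markov uniqueness. Let $\V_t=e^{\imm t\A'}$ be any strongly continuous, positivity- and unit-preserving unitary group on $L^2(\mu)$ whose generator extends $(\L,\F)$. By the converse direction of \autoref{thm:koopman} there is a $\mu$-preserving, $\mu$-a.e.\ invertible measurable flow $S_t:\Gamma\to\Gamma$ with $\V_t F=F\circ S_t$. I would first observe that both $N(\gamma)$ and the unordered intensity multiset of $\gamma$ can be recovered from cylinder observables in $\F$ (e.g.\ via the moments $\brak{\phi^p,\gamma}$ as $\phi$ ranges over an approximation of the identity), so $S_t$ preserves each $\Gamma_N$ and the intensities of $\gamma$ for $\mu$-a.e.\ $\gamma$. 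Differentiating $F(S_t\gamma)$ in $t$ and using $\A' F=\L F$ on $\F$, one extracts that for $\mu$-a.e.\ $\gamma$ and every $\phi\in C^\infty(\T^2)$ the map $t\mapsto\brak{\phi,S_t\gamma}$ is absolutely continuous with derivative $\brak{H_\phi,(S_t\gamma)\otimes(S_t\gamma)}$. This is exactly the Delort--Schochet weak formulation of Euler for the atomic vorticity $S_t\gamma$, and by uniqueness of the point vortices flow (\autoref{thm:dp} applied on each $\Gamma_N$) one concludes $S_t=\TT_t$ $\mu$-a.e., so $\V_t=U_t$ and $\A'=\A$.

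The technical crux lies in the last paragraph: upgrading the $\mu$-a.e.\ $L^2$-identity $\A'F=\L F$ into a pointwise ODE along $S_t$-orbits valid for $\mu$-a.e.\ $\gamma$ \emph{and all} $t\in\R$ simultaneously. This will require a countable dense subfamily of $\F$, Fubini in $(t,\gamma)$, and a careful exceptional-set argument, together with the verification that cylinder observables actually separate configurations enough to pin down each $S_t\gamma$ from the scalar evolutions $\brak{\phi,S_t\gamma}$.
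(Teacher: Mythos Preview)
Your overall architecture is sound and parallels the paper's, but there is a genuine slip in the $L^2(\mu)$ bound. The claim that $H_{\phi_k}$ \emph{extends continuously} to $\triangle^2$ with value $0$ is false: near the diagonal $H_\phi(x,y)\approx \tfrac{1}{4\pi}\frac{(x-y)^T D^2\phi(x)(x-y)^\perp}{|x-y|^2}$, which depends on the direction of approach, so $H_\phi$ has a jump discontinuity there (as the paper states explicitly). Boundedness still holds, so your $L^\infty$ estimate survives; however, the bound you write, $\abs{\brak{H_{\phi_k},\gamma\otimes\gamma}}\leq\norm{H_{\phi_k}}_\infty\pa{\sum_i|\xi_i|}^2$, includes the diagonal terms, and squaring it forces you to control $\EE_\mu\bigl[(\sum_i|\xi_i|)^4\bigr]$, which needs a finite \emph{fourth} moment of $\nu$, not the second moment that is assumed. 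The paper avoids this by a different mechanism: it does not use boundedness of $H_\phi$ at all, but rather that $H_\phi$ has \emph{zero average in each variable}, so in the expansion of $\bigl(\sum_{i\neq j}\xi_i\xi_j H_\phi(x_i,x_j)\bigr)^2$ only the terms with $\{i,j\}=\{\ell,k\}$ survive the $dx^N$ integration, leaving $(\EE_\nu\xi^2)^2\int H_\phi^2$. Your route can be repaired by writing the off-diagonal sum $\sum_{i\neq j}|\xi_i||\xi_j|$ and checking that its square never produces an index of multiplicity larger than two, but as stated the step fails.

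Two smaller points. Your fibre-wise reduction ``$(\L F)|_{\Gamma_N}$ coincides with the fixed-$N$ Liouville operator of \autoref{ssec:koopvortices} evaluated at $\tilde F$'' is imprecise: the symmetric function $\tilde F$ does \emph{not} belong to the domain $D$ of that section, since it does not vanish near $\triangle^N$, so the results there do not literally apply to it. The paper instead argues $\A|_\F=\L$ directly from the fact that $t\mapsto\TT_t\gamma$ satisfies the Delort--Schochet weak formulation \eqref{dssymm}, quoting \cite{scho96}; your chain-rule computation is essentially the infinitesimal version of that. Finally, your Markov uniqueness sketch via the converse Koopman theorem and uniqueness of the point-vortex flow is the intended route (the paper leaves it implicit, in parallel with \autoref{prop:uniquenessterms} and the remark after it), and the technical obstacles you flag in the last paragraph are real but standard.
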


\begin{proof}
	To show that $LF\in L^2(\mu)$, since $\partial_k f$ is uniformly bounded, we just need to compute for $\phi\in C^\infty(\T^2,\C)$,
	\begin{align*}
		&\int \brak{H_{\phi},\gamma\otimes\gamma}^2d\mu_N(\gamma)
		=\int_{\T^{2\times N}}\int_{\R^N} \pa{\sum_{i\neq j} \xi_i\xi_jH_\phi(x_i,x_j)}^2 dx^N d\nu^N(\ubar \xi)\\
		&\qquad =\int_{\T^{2\times N}}\int_{\R^N}\sum_{i\neq j, \ell\neq k} 
		\xi_i\xi_j\xi_\ell\xi_k H_\phi(x_i,x_j)H_\phi(x_\ell,x_k)dx^N d\nu^N(\ubar \xi)\\
		&\qquad =2\sum_{i\neq j} \int_{\R^2}\xi_i^2\xi_j^2d\nu(\xi_i)d\nu(\xi_j) \int_{\T^{2\times 2}} H_\phi(x,y)^2dxdy
		\leq C_{\phi,\nu} N^2,
	\end{align*}
	where we made essential use of the fact that $H_\phi$ is zero-averaged in both variables, so the only non vanishing terms
	in the double sum are the ones with $i=\ell, j=k$ (or vice-versa). We also recall that $\xi_i$ are independent with finite
	second moments. We mention that this was a crucial computation in the works \cite{flandoli,Gr19,GrRo19}. From here,
	\begin{equation*}
		\int \brak{H_{\phi},\gamma\otimes\gamma}^2 d\mu(\gamma)\leq e^{-\lambda}\sum_{N\geq 0} \frac{\lambda^N}{N!}C_{\phi,\nu} N^2<\infty,
	\end{equation*}
	from which we easily conclude $LF\in L^2(\mu)$.	
	
	We are left to prove that $U_t$ is differentiable on $\F$ and that its derivative at time $t=0$ is given by $L$.
	However, this is equivalent to show that $\omega_t=\TT_t\gamma$ solves \eqref{dssymm},
	for which we already referred to \cite{scho96}.
\end{proof}

Local observables $\F$ are not invariant for $\U_t$: this is due to the nonlinearity of the dynamics, not to
singularity of the interaction. Our techniques thus does not seem to be suited to this setting.

We conclude by mentioning an idea of \cite{scho96}, from which we quote:
\emph{``Considering point vortices to be solutions of the weak vorticity
	formulation allows us to extend their dynamics beyond collisions simply by 
	merging vortices that collide into a single vortex whose strength is the algebraic sum
	of the colliding vortices. Clearly this defines a solution for times less
	than and for times greater than the collision time, and the resulting vorticity
	is continuous in time in the weak-* topology of measures, so that there is no
	contribution [...] from the “jump” at the collision time. Of
	course, this extended notion of point-vortex dynamics is horribly nonunique since
	the time-reversibility of the Euler equations implies that a single vortex can split
	equally well into several vortices at any time.''}
Non uniqueness for the weak formulation of Euler equation in the point vortices case
might be a clue that $(\L,\F)$ is not essentially self-adjoint or even $L^2(\mu)$ unique.
However, producing counterexamples with collisions or splitting of vortices is a difficult problem:
explicit examples of collisions rely on integrability properties of the Hamiltonian dynamics.
Whether $(\L,\F)$ is essentially self-adjoint thus remains an interesting open question.

\bibliographystyle{plain}

\end{document}